\DeclareMathOperator*{\cov}{cov}
\DeclareMathOperator*{\var}{var}
\newcommand{\R}{\ensuremath{\mathbb{R}}}
\newcommand\sbullet[1][.5]{\mathbin{\vcenter{\hbox{\scalebox{#1}{$\bullet$}}}}}
\theoremstyle{definition}
\newtheorem{theorem}{Theorem}
\newtheorem{corollary}{Corollary}
\newtheorem{Aassumption}{Assumption}
\newcommand*{\indep}{
  \mathbin{
    \mathpalette{\@indep}{}
  }
}
\newcommand*{\nindep}{
  \mathbin{
    \mathpalette{\@indep}{\not}
                               
  }
}
\newcommand*{\@indep}[2]{
  \sbox0{$#1\perp\m@th$}
  \sbox2{$#1=$}
  \sbox4{$#1\vcenter{}$}
  \rlap{\copy0}
  \dimen@=\dimexpr\ht2-\ht4-.2pt\relax
  \kern\dimen@
  {#2}
  \kern\dimen@
  \copy0 
} 
\def\bias{B}
\def\tildeB{\mathcal{B}_\text{A3fail}}
\newcommand{\GG}[1]{}
\title{\textbf{The Effect of Omitted Variables on \\ the Sign of Regression Coefficients}\footnote{We thank audiences at various seminars and conferences, as well as the editor, the referees, Paul Diegert, Rob Garlick, Arik Levinson, and Muyang Ren for helpful conversations and comments. We thank Paul Diegert, Jack Duhon, Hongchang Guo, Eszter Kiss, Julia Ma, Daria Soboleva, Jordan Woltjer, and Shuhan Zou for excellent research assistance. Masten thanks the National Science Foundation for research support under Grant 1943138.}}
\author{Matthew A. Masten\footnote{Department of Economics, Duke University,
        \texttt{matt.masten@duke.edu}} \qquad Alexandre Poirier\thanks{
    Department of Economics, Georgetown University,
 \texttt{alexandre.poirier@georgetown.edu}}
}
\date{May 8th, 2026}
\begin{document}
\maketitle
\begin{abstract}
We show that, depending on how the impact of omitted variables is measured, it can be substantially \emph{easier} for omitted variables to flip coefficient signs than to drive them to zero. This behavior occurs with ``Oster's delta'' (\citealt{Oster2019}), a widely reported robustness measure. Consequently, any time this measure is large---suggesting that omitted variables may be unimportant---a much smaller value reverses the sign of the parameter of interest. We propose a modified measure of robustness to address this concern. We illustrate our results in four empirical applications and two meta-analyses. We implement our methods in the companion Stata module \texttt{regsensitivity}.
\end{abstract}

\bigskip
\small
\noindent \textbf{JEL classification:}
C14; C18; C21; C25; C51

\bigskip
\noindent \textbf{Keywords:}
Identification, Treatment Effects, Partial Identification, Sensitivity Analysis, Unconfoundedness

\onehalfspacing
\normalsize
\section{Introduction}\label{sec:introduction}

The analysis of causality often relies on untestable assumptions, like the absence of unobserved omitted variables that could bias one's findings. A large literature in statistics and econometrics now provides many tools that allow researchers to perform sensitivity analyses to assess the importance of these assumptions. Many of these methods use \emph{breakdown points} as quantitative measures of the robustness of one's conclusions to departures from the baseline identifying assumptions. These points tell us how much we can depart from our identifying assumptions before our empirical conclusions break down in some way. 

In this paper, we distinguish between two specific kinds of breakdown points. The first is the \emph{explain away} breakdown point. It answers the question
\begin{itemize}
\item[] What is the smallest value of the sensitivity parameter required for the data to be consistent with a zero causal effect?
\end{itemize}
The second is the \emph{sign change} breakdown point. It answers the question
\begin{itemize}
\item[] What is the smallest value of the sensitivity parameter required for the data to be consistent with a causal effect that has a different sign from the causal effect we found in the baseline model?
\end{itemize}
These two breakdown points are often equal, but are not generally equivalent. In particular, the sign change breakdown point can be \emph{smaller} than the explain away breakdown point. Hence it can be easier to reverse the sign of one's results than to drive them to zero. This can occur when the omitted variable bias is discontinuous in the sensitivity parameter, allowing the value of the bias adjusted estimand to jump across the horizontal axis at zero as the sensitivity parameter varies. This behavior is illustrated in Figure \ref{fig:intro}, which we discuss more below. Such discontinuities can arise in regression analysis because the sensitivity parameters often involve covariance and variance terms, which lead to nonlinear restrictions on the value of the bias.

In this paper, we study the relationship between these two kinds of breakdown points in coefficient stability analyses used to assess the importance of omitted variables in linear regression analysis. These analyses examine how coefficients change when additional regressors are added, and have a long history in empirical economics. Folk wisdom holds that if the additional covariates increase the R-squared substantially, but the coefficient nonetheless does not change much, then we can be more confident that any further omitted variable bias is small. For example, see the textbook discussion on pages 74--78 of \cite{AngristPischke2015}. This idea was first formalized by \cite{AltonjiElderTaber2005a, AltonjiElderTaber2005b, AltonjiElderTaber2008}. It was later substantially extended by \cite{Oster2019} (hereafter Oster), an extremely influential paper with about 5500 Google Scholar citations as of December 2025. \citet[\emph{AER}, page 2706]{FinkelsteinEtAl2021} describe it as providing ``the now-standard methodology'' and ``the standard approach'' to adjusting for selection on unobservables.\footnote{Other papers on regression sensitivity include \cite{Mauro1990}, \cite{MurphyTopel1990}, \cite{Frank2000}, \cite{Imbens2003}, \cite{AltonjiElderTaber2005a}, \cite{Clarke2009}, \cite{BellowsMiguel2009}, \cite{HosmanHansenHolland2010}, \cite{GonzalezMiguel2015}, \cite{Krauth2016}, \cite{CinelliHazlett2020}, \cite{DMP2025ident, DMP2025axiom} for linear models and \cite{RosenbaumRubin1983sensitivity}, \cite{Rosenbaum1995, Rosenbaum2002}, \cite{RobinsRotnitzkyScharfstein2000}, \cite*{AltonjiElderTaber2005a, AltonjiElderTaber2008}, \cite{MastenPoirier2018, MastenPoirier2019BF}, and \cite{AET2019} for nonlinear models.} Moreover, from 2019--2021, 25 papers published in the top five economics journals formally compute and report the results from this method, while another 6 papers informally reference this method in support of their analysis. This implies that, in an average year, more papers use this method than two-stage least squares (based on the data collected by \citealt{BlandholEtAl2022}). We discuss in detail how Oster's \citeyearpar{Oster2019} results are used in empirical economics in Appendix \ref{sec:inPractice}. 

In Section \ref{sec:OsterReview} we review Oster's \citeyearpar{Oster2019} method. The main sensitivity parameter for this method, denoted by $\delta$, is commonly interpreted as the ratio of the magnitude of selection on unobservables to the magnitude of selection on observables. We then study breakdown points for this sensitivity parameter in Section \ref{sec:OsterSignFlip}. Oster's Proposition 2 characterizes the explain away breakdown point. Following her recommendation, empirical researchers now routinely report estimates of this explain away breakdown point as a measure of robustness to omitted variables. We show that this explain away breakdown point is very different from the sign change breakdown point. In fact, we show that any time the explain away breakdown point is large---suggesting that omitted variables may be unimportant---a much smaller value of her sensitivity parameter can actually change the sign of the parameter of interest. In particular, we prove that the sign change breakdown point is bounded above by $1$. This is a substantial concern since 1 is often viewed as the cutoff for a robust result. Consequently, if we maintain 1 as the cutoff, our result implies that \emph{no empirical results are robust to sign changes}, using Oster's method. Put differently: The true causal effect can have the opposite sign as the baseline estimand any time the magnitude of selection on unobservables is at least as large as the magnitude of selection on observables.

\begin{figure}[t]
\centering
\includegraphics[width=0.475\textwidth]{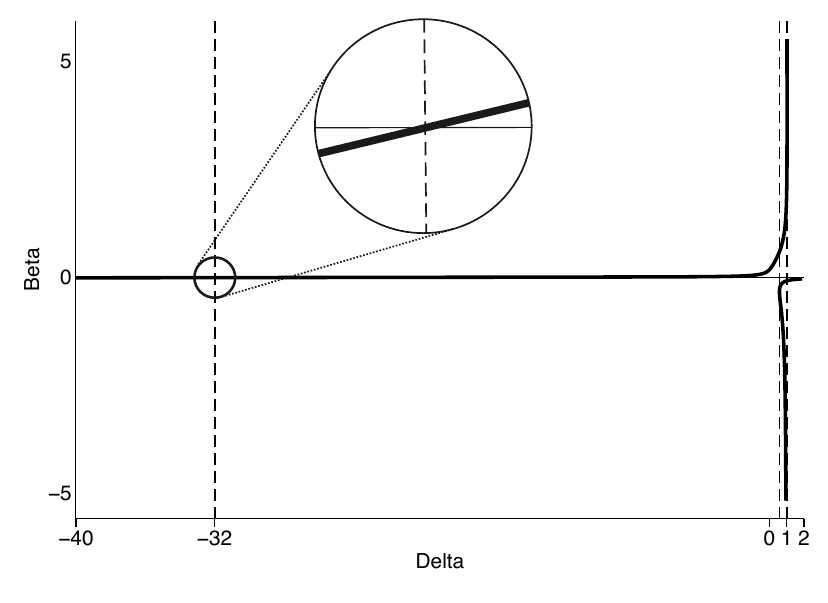}
\includegraphics[width=0.475\textwidth]{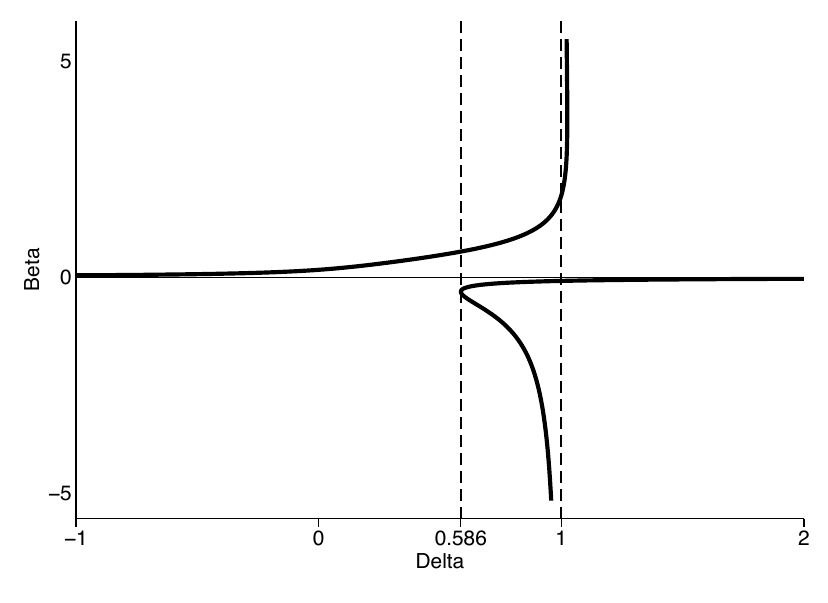}
\caption{\footnotesize Example of Oster's identified set $\mathcal{B}_I(\delta, R_\text{long}^2)$ for the regression coefficient $\beta$ as a function of the sensitivity parameter $\delta$ (see Section \ref{sec:OsterReview} for a definition of this set and a description of the $R_\text{long}^2$ parameter). As we discuss in Section \ref{sec:osterIdentification}, this identified set always has at most three elements. Both plots show the same set, but with different horizontal axis ranges. In this example, the explain away breakdown point is $| -32 |$. This is shown in the magnified region on the left plot. In contrast, the sign change breakdown point is 0.586, as shown in the right plot. The plot also shows the vertical asymptote at $\delta = 1$ which we discuss in Section \ref{sec:deltaSignChangeBP}. This figure is based on our empirical application to \cite{SatyanathVoigtlanderVoth2017}; see Section \ref{sec:empirical}.\label{fig:intro}}
\end{figure}

To build intuition for our main results in Section \ref{sec:OsterSignFlip}, consider Figure \ref{fig:intro}, which is based on our empirical application to \citet*{SatyanathVoigtlanderVoth2017}. Both plots show the same curve, but with different horizontal axis ranges, to highlight different aspects which we discuss below. This curve is an example from our empirical application in Section \ref{sec:empirical}. It shows the pairs of regression coefficient values $\beta$ (on the vertical axis) and Oster's main sensitivity parameter $\delta$ (on the horizontal axis) that are consistent with the data and assumptions. Importantly, as shown in the right plot, for some values of $\delta$ there are multiple values of $\beta$ that are consistent with the data. This occurs because $\delta$ is defined to be a ratio of two regression coefficients, and hence knowledge of $\delta$'s exact value implies that $\beta$ must satisfy a fairly complicated nonlinear constraint. Specifically, it must be the solution to a cubic equation (see equation \eqref{eq:sol_to_cubic_eqn} on page \pageref{eq:sol_to_cubic_eqn}), which can have multiple roots. This explains the source of the discontinuity in the value of the omitted variable bias that we discussed earlier, and why for this sensitivity parameter it is easier to flip the coefficient's sign than to obtain an exact zero.

Concretely, the baseline estimate is obtained at $\delta = 0$. In this empirical example, the baseline estimate is positive. How robust is this estimate to the presence of omitted variables? In the left plot we see that we must go all the way to $\delta = -32$ before the value $\beta = 0$ is consistent with the data and assumptions. This is the number that researchers commonly report as a measure of robustness. However, from the right plot, we see that we only need to go to $\delta = 0.586$ to find a value of $\beta$ that is negative. This value $|\delta| = 0.586$ is the sign change breakdown point. \cite{Oster2019} does not discuss this breakdown point, its companion Stata package \texttt{psacalc} does not compute it, and none of the papers we survey in Appendix \ref{sec:inPractice} report it. Yet, as we see here, the sign change breakdown point is substantially smaller than $| -32 |$, the explain away breakdown point. Indeed, its magnitude is below $1$, the commonly used cutoff for a robust result. Consequently, in this empirical example, focusing on the sign change breakdown point overturns the authors' conclusion of robustness: Only a small amount of selection on unobservables relative to observables is necessary to flip the sign of the baseline estimate.

Importantly, \emph{all} of the coefficient values plotted in Figure \ref{fig:intro}, even the seemingly extreme values, are equally consistent with the data and assumptions. Under the assumptions of Oster's Proposition 2, which is used to obtain these plots, there is no justification for ignoring any of these values. The only way to eliminate values is by making additional assumptions. One reasonable assumption, which we consider in Section \ref{sec:modified}, is to restrict the magnitude of the omitted variable bias. With this additional assumption, we can justifiably eliminate some of the most extreme values. This assumption does not overturn our main conclusions, however. For example, in Figure \ref{fig:intro} the coefficient value consistent with $\delta = 0.586$ is not extreme at all---it is the same order of magnitude as the baseline estimate. So in this example, the sign change breakdown point is still very different from the explain away breakdown point, even with a substantial restriction on the magnitude of the omitted variable bias. Moreover, magnitude restrictions require researchers to choose a tuning parameter---just how much should they restrict the magnitude? If the magnitude is restricted too much, then this assumption is essentially equivalent to assuming a priori that there is no omitted variable bias, in which case a sensitivity analysis would be unnecessary. But as soon as reasonably sized magnitudes are allowed, the concerns we discuss in sections \ref{sec:OsterSignFlip} and \ref{sec:biasAdjustment} can arise.

\begin{figure}[t]
\centering
\raisebox{-0.5\height}{\includegraphics[width=0.35\textwidth]{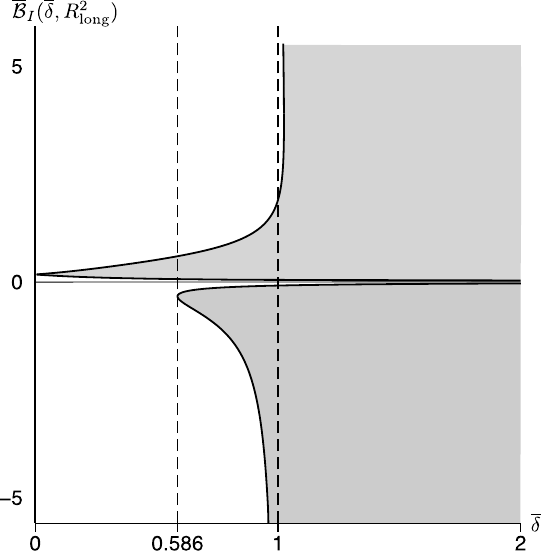}}
\hspace{0.05\textwidth}
\raisebox{-0.5\height}{\includegraphics[width=0.48\textwidth]{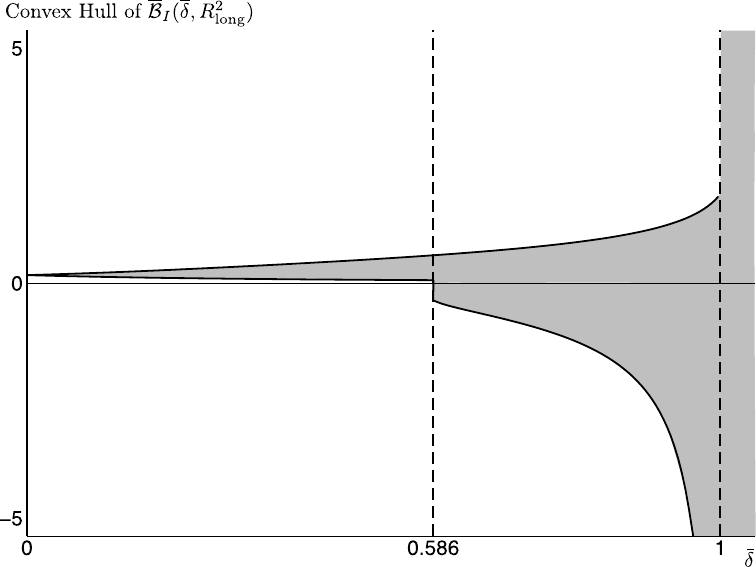}}
\caption{\footnotesize Left: The cumulative identified set for the example in Figure \ref{fig:intro}. Here we see that for large values of $\bar{\delta}$ the identified set includes every value in $\R$ except a small interval around 0. The set includes 0 for $\bar{\delta} \geq 32$, the explain away breakdown point. But the sign change breakdown point occurs much earlier, at $\delta = 0.586$. Moreover, note that the identified set at $\bar{\delta} = 1$ is $(-\infty, -0.0855] \cup [0.0432,1.8947]$. At $\bar{\delta} = 1.05$ it is $(-\infty,-0.0798] \cup [0.0415,\infty)$. Right: The convex hull of the left plot. This figure is based on our empirical application to \cite{SatyanathVoigtlanderVoth2017}; see Section \ref{sec:empirical}. \label{fig:osterCumulative}}
\end{figure}

Researchers also commonly report selection bias adjusted estimates. In section \ref{sec:biasAdjustment} we show that the most commonly used adjustment is extremely sensitive to perturbations of Oster's sensitivity parameter. To see why, consider again Figure \ref{fig:intro}. There is always an asymptote at $\delta = 1$. This implies that bias adjustments computed at $\delta$ exactly equal to one---which is currently standard empirical practice---are \emph{not} representative of bias adjustments computed at nearby values of $\delta$. 

To address the issues described in sections \ref{sec:OsterSignFlip} and \ref{sec:biasAdjustment}, we propose several simple modifications to the methods in \cite{Oster2019}. First, we recommend gathering all values of the coefficient consistent with a range of sensitivity parameters, and presenting this set. The left plot in Figure \ref{fig:osterCumulative} gives an example: Here we gather all $\beta$'s consistent with $| \delta |$ less than some fixed $\bar{\delta}$. We then plot that set as a function of $\bar{\delta}$. The graph of these values of $\beta$ is discontinuous in $\bar{\delta}$ at $\bar{\delta} = 0.586$, which explains the difference in the sign change breakdown and explain away breakdown points. As we saw in Figure \ref{fig:intro}, nearly all values of $\beta$, including arbitrarily large negative values, are consistent with an upper bound of $\bar{\delta} = 1.05$, but a small range around 0 is excluded from the set, and the value 0 does not become part of this set until $\bar{\delta}$ equals 32. Researchers who want to focus only on the largest and smallest possible values of $\beta$ can instead present the convex hull, which is shown in the right plot of Figure \ref{fig:osterCumulative}.

Second, we develop the sign change breakdown point for Oster's main sensitivity parameter and recommend that researchers report this rather than, or in addition to, the explain away breakdown point. Furthermore, in Section \ref{sec:modified}, we provide extensions to these two recommendations to formally incorporate the magnitude restrictions discussed above. We show that the sign change breakdown point can be strictly larger than 1 when a magnitude restriction is imposed. Therefore it is possible to recover findings of robustness even without changing the cutoff for robustness. This is not always the case however, as we saw in Figure \ref{fig:intro}.

\nocite{SatyanathVoigtlanderVoth2017data}
\nocite{SatyanathVoigtlanderVoth2017}

In Section \ref{sec:empirical} we illustrate our results using data from Satyanath et al.\ (2017a,b, \emph{Journal of Political Economy}), who study the effect of social capital on the rise of the Nazi party. There we show just how different the sign change and explain away breakdown points can be in practice. We then show how magnitude restrictions can sometimes help retain findings of robustness. We also illustrate the extreme sensitivity of bias adjusted estimates to perturbations of the sensitivity parameter. In Appendix \ref{sec:extraEmpirical_applications} we study three additional empirical applications in detail, all published in the \emph{American Economic Review}.

To see that these four applications are not outliers, in Section \ref{sec:metaAnalysis} we conduct two meta-analyses covering 58 different empirical applications published in the top five economics journals plus the \emph{AEJ: Applied Economics}. We show that the distinction between explain away and sign change breakdown points leads to substantively different conclusions about robustness to omitted variables in the majority of empirical applications. In particular, suppose researchers were to continue the current standard practice of reporting explain away breakdown points, but wanted to interpret them as sign change breakdown points. We show that in more than 50\% of applications, this interpretation is only valid if researchers \emph{already} know the sign of the coefficient of interest before performing the sensitivity analysis. But in that case, it would not be necessary to do a sensitivity analysis, since the result is known a priori.

Thus far we have discussed ways of assessing omitted variable bias by using $\delta$ as the main sensitivity parameter. We also briefly discuss several critiques of $\delta$ itself in Section \ref{sec:delta_discussion}. Finally, in Section \ref{sec:conclusion} we conclude with a summary of our recommendations.

\section{Setup and Identification Analysis}\label{sec:OsterReview}

In this section we review the main identification results in \cite{Oster2019}, including her breakdown analysis. Throughout this section, as well as sections \ref{sec:OsterSignFlip}--\ref{sec:delta_discussion}, we abstract from sampling uncertainty and assume the population data is known. We briefly discuss estimation and inference in Appendix \ref{sec:estimation}.

\subsection{Regressions of Interest}

Let $Y$ be an outcome variable, $X$ a treatment variable, $W_1$ a vector of observed covariates, and $W_2$ a vector of unobserved covariates. The following assumption ensures that the OLS estimands we will consider are well defined.

\begin{Aassumption}\label{assn:posdef}
$\var(Y,X,W_1,W_2)$ is finite. $\var(X,W_1,W_2)$ and $\var(Y,X,W_1)$ are positive definite.
\end{Aassumption}

\ref{assn:posdef} allows $Y$ to be perfectly collinear with $(X,W_1,W_2)$, but not with $(X,W_1)$. Collinearity of $Y$ with $(X,W_1)$ would violate \ref{assn:nonzero} below. 

Consider three OLS regressions:
\begin{enumerate}
\item The \textbf{short regression} of
\[
	Y \text{ on } (1,X).
\]
Let $\beta_\text{short}$ denote the corresponding population regression coefficient on $X$. Let $R_\text{short}^2$ denote the corresponding R-squared.

\item The \textbf{medium regression} of
\[
	Y \text{ on } (1,X,W_1).
\]
Let $\beta_\text{med}$ denote the corresponding population regression coefficient on $X$. Let $R_\text{med}^2$ denote the corresponding R-squared. Let $\gamma_{1,\text{med}}$ denote the coefficients on $W_1$.

\item The \textbf{long regression} of
\[
	Y \text{ on } (1,X,W_1,W_2).
\]
Let $\beta_\text{long}$ denote the corresponding population regression coefficient on $X$. Let $R_\text{long}^2$ denote the corresponding R-squared. Note that $R_\text{long}^2 \in [R_\text{med}^2,1]$. Let $(\gamma_{1,\text{long}}, \gamma_{2,\text{long}})$ denote the coefficients on $W \coloneqq (W_1, W_2)$.
\end{enumerate}
It is well known that linear regression coefficients can be given causal interpretations under various identifying assumptions (e.g., \citealt{AngristPischke2009}). In this paper, we take it as given that $\beta_\text{long}$ is the parameter of interest. This focus on $\beta_\text{long}$ is often because we have in fact made assumptions that make it a causal parameter, but our technical results do not rely on those motivating assumptions. So, taking interest in $\beta_\text{long}$ as given, write the long regression as
\begin{equation}\label{eq:proj1}
	Y = \beta_\text{long} X + \gamma_{1,\text{long}}' W_1 + \gamma_{2,\text{long}}' W_2 + Y^{\perp X,W}
\end{equation}
where $Y^{\perp X,W}$ is defined to be the OLS residual plus the intercept term, and hence is uncorrelated with each component of $(X,W_1,W_2)$ by construction. The identification problem is that $W_2$ is not observed. Instead, we only assume that the joint distribution of $(Y,X,W_1)$ is known. This allows us to point identify $\beta_\text{short}$ and $\beta_\text{med}$. These coefficients usually do not equal $\beta_\text{long}$, however. The difference between $\beta_\text{long}$ and either $\beta_\text{short}$ or $\beta_\text{med}$ is omitted variable bias. We next consider assumptions that will help us bound the magnitude of this omitted variable bias.

\subsection{Comparing Selection on Observables and on Unobservables}\label{sec:defOfDelta}

Following the analysis of \citet[pages 175--176]{AltonjiElderTaber2005a}, \cite{Oster2019} recommends that we measure the magnitude of selection on unobservables via the parameter
\begin{equation}\label{eq:deltaMainDefinition}
	\delta \coloneqq \frac{ \cov(X, \gamma_{2,\text{long}}' W_2) }{ \var(\gamma_{2,\text{long}}'W_2) } \Bigg/ \frac{\cov(X,\gamma_{1,\text{long}}'W_1) }{\var(\gamma_{1,\text{long}}'W_1)}.
\end{equation}
To allow for cases where the denominator is zero, our formal analysis uses the following assumption.

\begin{Aassumption}\label{assn:delta}
There is a known $\delta \in \R$ such that
\begin{equation}\label{eq:mainDeltaEquation}
	\delta \cdot \frac{\cov(X,\gamma_{1,\text{long}}'W_1) }{\var(\gamma_{1,\text{long}}'W_1)} = \frac{ \cov(X, \gamma_{2,\text{long}}' W_2) }{ \var(\gamma_{2,\text{long}}' W_2) }.
\end{equation}
\end{Aassumption}

The following assumption, combined with \ref{assn:posdef}, ensures that we are not dividing by zero in equation \eqref{eq:mainDeltaEquation}.

\begin{Aassumption}\label{assn:nonzero}
	$\gamma_{1,\text{long}} \neq 0$ and $\gamma_{2,\text{long}} \neq 0$.
\end{Aassumption}

The parameter $\delta$ depends on two terms:
\begin{enumerate}
\item (``Selection on unobservables'') Regress $X$ on $(1,\gamma_{2,\text{long}}'W_2)$ and get the coefficient on the index $\gamma_{2,\text{long}}'W_2$. 

\item (``Selection on observables'') Regress $X$ on $(1,\gamma_{1,\text{long}}'W_1)$ and get the coefficient on the index $\gamma_{1,\text{long}}'W_1$. 
\end{enumerate}
$\delta$ is the ratio of these two regression coefficients. $\delta$ is not known from the data since it depends on $W_2$, which is not observed, and $(\gamma_{1,\text{long}},\gamma_{2,\text{long}})$, which are also unknown. Instead, we will study what can be said about $\beta_\text{long}$ under assumptions on $\delta$. Several papers propose alternative sensitivity parameters, including \cite{CinelliHazlett2020} and \cite{DMP2025ident}. In this paper we instead use the same sensitivity parameter $\delta$ as \cite{Oster2019}, since our focus here is to study the implications of using different kinds of breakdown points. We give more discussion of $\delta$'s interpretation in Section \ref{sec:delta_discussion} below.

\subsubsection*{Baseline covariates}

$\delta$ is defined to calibrate the magnitude of selection on unobservables by using the observed covariates $W_1$. Typically we will have additional covariates $W_0$ which we want to include in the analysis, but \emph{not} use for calibration. Denote these baseline controls by $W_0$. To incorporate these covariates, replace $(X,W_1,W_2)$ with $(X^{\perp W_0}, W_1^{\perp W_0}, W_2^{\perp W_0})$, the variables after residualizing with respect to $W_0$, throughout the entire analysis. See Section 3.4 of \cite{DMP2025ident} for a justification of this approach and further discussion. For simplicity, we omit $W_0$ throughout our theoretical analysis.

\subsection{Identification of $\beta_\text{long}$}\label{sec:osterIdentification}

In this section we state a version of the main and most general identification result in \cite{Oster2019}. That paper also analyzes several special cases which we do not discuss for brevity. The main substantive assumption is the following.

\begin{Aassumption}[Exogenous controls]\label{assn:exog}
	$\cov(W_1,W_2) = 0$.
\end{Aassumption}

\ref{assn:exog} requires that all of the observed covariates are uncorrelated with the omitted variable. Appendix A.1 in \cite{Oster2019} briefly discusses one approach to relaxing \ref{assn:exog}, which we discuss in Section \ref{sec:delta_discussion}. Also see Appendix C of \cite{DMP2025axiom} for additional details. \cite{DMP2025ident} develop an alternative approach to sensitivity analysis that does not rely on \ref{assn:exog}. In this paper we maintain \ref{assn:exog} throughout because our main focus is to study different notions of breakdown. For simplicity, from here on we assume $W_2$ is a scalar.  All of our results can be generalized to vector $W_2$ by working with the scalar index $\gamma_{2,\text{long}}'W_2$ everywhere (as in the proof of Theorem 2 in \citealt{DMP2025ident}, for example).

To characterize the possible values for $\beta_\text{long}$, it is useful to consider the regression of $X$ on $(1,W_1)$. Let $\pi_1$ denote the coefficient on $W_1$ and let $X^{\perp W_1} \coloneqq X - \pi_1'W_1$.

Under the above assumptions, Oster's Proposition 2 shows that $\beta_\text{long}$ must lie in the set
\begin{align}\label{eq:sol_to_cubic_eqn}
	\mathcal{B}(\delta, R_\text{long}^2) \coloneqq \{ b \in \R: f(\beta_\text{med} - b,\delta,R_\text{long}^2) = 0 \},
\end{align}
where $f(\bias,\delta, R_\text{long}^2) \coloneqq f_0(\bias) + \delta \cdot f_1(\bias, R_\text{long}^2)$ with
\begin{align*}
	f_0(\bias) &\coloneqq -\bias\var(X^{\perp W_1})\left(\var(\gamma_{1,\text{med}}'W_1) + 2\bias\cov(X,\gamma_{1,\text{med}}'W_1) + \bias^2\var(\pi_1'W_1)\right) \\[1em]
	f_1(\bias,R_\text{long}^2)
	&\coloneqq (R_\text{long}^2 - R_\text{med}^2)\var(Y) \cov(X, \gamma_{1,\text{med}}' W_1)
	+ \bias (R_\text{long}^2 - R_\text{med}^2) \var(Y) \var(\pi_1'W_1) \\
	&\qquad + \bias^2 \var(X^{\perp W_1})\cov(X,\gamma_{1,\text{med}}'W_1)
	+ \bias^3\var(X^{\perp W_1})\var(\pi_1'W_1).
\end{align*}
As \cite{Oster2019} discusses, the set in \eqref{eq:sol_to_cubic_eqn} are the roots of a cubic polynomial. Hence this set has at most three elements. Figure \ref{fig:intro} shows an example of this set, as a function of $\delta$. Our Theorem \ref{thm:IDset} in Appendix \ref{sec:technical} extends this analysis to show that $\mathcal{B}(\delta, R_\text{long}^2)$ is essentially sharp. That is, almost any element in this set is in fact a feasible value of $\beta_\text{long}$. Formally, we show that it equals the identified set (denoted by $\mathcal{B}_I(\delta,R_\text{long}^2)$), after subtracting off the set $\tildeB \coloneqq \{b \in \R: \gamma_{1,\text{med}} + (\beta_\text{med} - b)\pi_1 = 0\}$:
\begin{align}\label{eq:ID_set_expression}
	\mathcal{B}_I(\delta,R_\text{long}^2) = \mathcal{B}(\delta,R_\text{long}^2) \setminus \tildeB.
\end{align}
The set we subtract off are values of $\beta_\text{long}$ that are not consistent with \ref{assn:nonzero}; this set is typically empty, however. This sharpness is important to ensure that breakdown analysis is not unnecessarily conservative. Finally, in Appendix \ref{sec:technical} we also show how the analysis simplifies to the baseline case when $\delta = 0$, which gives $\beta_\text{long} = \beta_\text{med}$.

\subsection{Oster's Breakdown Analysis}

Thus far we have characterized the identified set for $\beta_\text{long}$ when $\delta$ and $R_\text{long}^2$ are known. Next we consider breakdown analysis. In this section we review the main breakdown point result in \cite{Oster2019}. We then compare it with the sign change breakdown point in Section \ref{sec:OsterSignFlip}.

Let $\beta_\text{hypo} \in \R$. Define the \emph{exact value} breakdown point
\[
	\delta^{\text{bp},=}(\beta_\text{hypo},R_\text{long}^2)
	\coloneqq \inf \{ | \delta | : \delta \in \R, \beta_\text{hypo} \in \mathcal{B}_I(\delta, R_\text{long}^2) \}.
\]
This is the smallest magnitude of the sensitivity parameter $\delta$ that is compatible with $\beta_\text{long} = \beta_\text{hypo}$. That is, for all $\delta$ with $| \delta | < \delta^{\text{bp},=}(\beta_\text{hypo},R_\text{long}^2)$, we know that $\beta_\text{hypo}$ is not in the identified set $\mathcal{B}_I(\delta, R_\text{long}^2)$. The next result gives a closed form solution for this value.

\begin{theorem}\label{thm:OsterProp3}
Suppose the joint distribution of $(Y,X,W_1)$ is known, \ref{assn:posdef}, \ref{assn:nonzero}, and \ref{assn:exog} hold, $R_\text{long}^2 \in (R_\text{med}^2,1]$ is known, $\beta_\text{long} = \beta_\text{hypo}$ for a known $\beta_\text{hypo} \in \R \setminus \tildeB$, and $\beta_\text{short} \neq \beta_\text{med}$. Then the solution $\delta$ to equation \eqref{eq:mainDeltaEquation} is point identified and equals
\[
	\frac{-f_0(\beta_\text{med} - \beta_\text{hypo})}{f_1(\beta_\text{med} - \beta_\text{hypo}, R_\text{long}^2)}.
\]
Consequently, the absolute value of this term equals $\delta^{\text{bp},=}(\beta_\text{hypo},R_\text{long}^2)$.
\end{theorem}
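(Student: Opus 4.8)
Here is how I would approach the proof.

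The plan is to use Theorem~\ref{thm:IDset}, which characterizes the identified set via the cubic $f(\beta_\text{med} - b, \delta, R_\text{long}^2) = 0$, together with the fact that $f$ is \emph{affine} in $\delta$. Since the true $\beta_\text{long}$ lies in its own identified set and $\beta_\text{long} = \beta_\text{hypo} \notin \tildeB$ (by assumption; note also that A\ref{assn:nonzero} already forbids $\gamma_{1,\text{long}} = 0$), Theorem~\ref{thm:IDset} gives $f(\beta_\text{med} - \beta_\text{hypo}, \delta, R_\text{long}^2) = 0$. Writing $\bias = \beta_\text{med} - \beta_\text{hypo}$, this reads $f_0(\bias) + \delta\, f_1(\bias, R_\text{long}^2) = 0$, so once I show $f_1(\bias, R_\text{long}^2) \neq 0$ I can solve $\delta = -f_0(\bias)/f_1(\bias, R_\text{long}^2)$. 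Because $\beta_\text{med}$, $\pi_1$, $\gamma_{1,\text{med}}$, $R_\text{med}^2$ and the relevant variances and covariances are all determined by the known distribution of $(Y,X,W_1)$, and $R_\text{long}^2$ and $\beta_\text{hypo}$ are known, this yields both point identification and the claimed formula. So the entire theorem reduces to proving $f_1(\bias, R_\text{long}^2) \neq 0$.

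To do that I would first put $f_1$ in factored form. Collecting the two terms of $f_1$ that carry $\cov(X,\gamma_{1,\text{med}}'W_1)$ and the two that carry $\var(\pi_1'W_1)$ gives
\[
	f_1(\bias, R_\text{long}^2)
	= \bigl[(R_\text{long}^2 - R_\text{med}^2)\var(Y) + \bias^2\var(X^{\perp W_1})\bigr]\cdot\bigl[\cov(X,\gamma_{1,\text{med}}'W_1) + \bias\,\var(\pi_1'W_1)\bigr].
\]
The first bracket is strictly positive, since $R_\text{long}^2 > R_\text{med}^2$ and $\var(Y) > 0$ by A\ref{assn:posdef}. For the second bracket, use $\cov(X,\pi_1'W_1) = \var(\pi_1'W_1)$ (the residual $X^{\perp W_1}$ is uncorrelated with $W_1$; Lemma~\ref{lemma:regfacts}), linearity of covariance, and \eqref{eq:osterA2eq} with $\beta_\text{long} = \beta_\text{hypo}$, to rewrite it as $\cov\bigl(X, (\gamma_{1,\text{med}} + (\beta_\text{med} - \beta_\text{long})\pi_1)'W_1\bigr) = \cov(X, \gamma_{1,\text{long}}'W_1)$. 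So it remains to show $\cov(X, \gamma_{1,\text{long}}'W_1) \neq 0$.

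This is the crux, and it is precisely where the hypothesis $\beta_\text{short} \neq \beta_\text{med}$ is used. I would argue by contradiction: if $\cov(X, \gamma_{1,\text{long}}'W_1) = 0$, then (the denominators in A\ref{assn:delta} being positive by A\ref{assn:posdef} and A\ref{assn:nonzero}) A\ref{assn:delta} forces $\cov(X, \gamma_{2,\text{long}}'W_2) = 0$, hence $\cov(X, W_2) = 0$ since $\gamma_{2,\text{long}} \neq 0$ and $W_2$ is scalar. Together with A\ref{assn:exog}, $W_2$ is then uncorrelated with $(X, W_1)$, so removing $W_2$ from the long regression leaves the coefficients on $(X, W_1)$ unchanged (standard regression algebra; Lemma~\ref{lemma:regfacts}), i.e.\ $\beta_\text{long} = \beta_\text{med}$ and $\gamma_{1,\text{long}} = \gamma_{1,\text{med}}$. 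But then $\cov(X, \gamma_{1,\text{long}}'W_1) = \cov(X, \gamma_{1,\text{med}}'W_1) = (\beta_\text{short} - \beta_\text{med})\var(X)$, which is nonzero because $\beta_\text{short} \neq \beta_\text{med}$ and $\var(X) > 0$---contradiction. Hence $\cov(X, \gamma_{1,\text{long}}'W_1) \neq 0$, so $f_1(\bias, R_\text{long}^2) \neq 0$ and the identification claim is proved.

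For the last sentence of the theorem, note that $f_1(\bias, R_\text{long}^2) \neq 0$ is now established as a property of the known distribution. For any $\delta' \in \R$, Theorem~\ref{thm:IDset} (recalling $\beta_\text{hypo} \notin \tildeB$) makes $\beta_\text{hypo} \in \mathcal{B}_I(\delta', R_\text{long}^2)$ equivalent to $f_0(\bias) + \delta'\, f_1(\bias, R_\text{long}^2) = 0$, i.e.\ to $\delta' = -f_0(\bias)/f_1(\bias, R_\text{long}^2)$. Thus the set over which the infimum defining $\delta^{\text{bp},=}(\beta_\text{hypo}, R_\text{long}^2)$ is taken is a single point, and the breakdown point equals its absolute value. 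The only steps I expect to require real care are the factorization of $f_1$ and the degenerate ``no selection on observables'' case handled by contradiction above; the remainder is routine manipulation of known regression identities.
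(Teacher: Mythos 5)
Your proof is correct, and the setup (reduce everything to showing $f_1(\beta_\text{med}-\beta_\text{hypo},R_\text{long}^2)\neq 0$, then read off $\delta$ from the affine-in-$\delta$ equation $f_0+\delta f_1=0$, then note that the set $\{\delta': \beta_\text{hypo}\in\mathcal{B}_I(\delta',R_\text{long}^2)\}$ is a singleton) matches the paper exactly. Where you genuinely diverge is in the crux step. The paper argues indirectly through $f_0$: if $f_1=0$ then the equation forces $f_0(\bias)=0$; since $f_0(\bias)=-\bias\var(X^{\perp W_1})\var\big((\gamma_{1,\text{med}}+\bias\pi_1)'W_1\big)$ and $\beta_\text{hypo}\notin\tildeB$ rules out the factor $\gamma_{1,\text{med}}+\bias\pi_1=0$, this forces $\bias=0$, and then $f_1(0,R_\text{long}^2)=(R_\text{long}^2-R_\text{med}^2)\var(Y)\cov(X,\gamma_{1,\text{med}}'W_1)\neq 0$ by $\beta_\text{short}\neq\beta_\text{med}$ --- a contradiction that never leaves the observable cubic. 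You instead factor $f_1$ directly as $\big[(R_\text{long}^2-R_\text{med}^2)\var(Y)+\bias^2\var(X^{\perp W_1})\big]\cdot\cov(X,\gamma_{1,\text{long}}'W_1)$ (the factorization checks out) and rule out $\cov(X,\gamma_{1,\text{long}}'W_1)=0$ by a structural argument: A\ref{assn:delta} and A\ref{assn:nonzero} would then force $\cov(X,W_2)=0$, which with A\ref{assn:exog} collapses the long regression onto the medium one, contradicting $\cov(X,\gamma_{1,\text{med}}'W_1)=(\beta_\text{short}-\beta_\text{med})\var(X)\neq 0$. Both arguments are valid. The paper's route is slightly more economical because it needs nothing beyond the cubic and the definition of $\tildeB$ (and in particular does not invoke the existence of the unobserved $W_2$ satisfying A\ref{assn:delta}); your route has the compensating virtue of making transparent \emph{why} $f_1$ can only vanish in the degenerate ``no selection on observables'' case, and of exhibiting the second factor of $f_1$ as the economically interpretable quantity $\cov(X,\gamma_{1,\text{long}}'W_1)$. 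One cosmetic point: the fact that dropping a regressor uncorrelated with all others leaves the remaining coefficients unchanged is standard but is not literally one of the identities in Lemma \ref{lemma:regfacts}; if you wanted to stay inside the paper's toolkit you could instead get $\bias=0$ from equation \eqref{eq:lemma1-2}, since $\cov(X,W_2)=0$ and A\ref{assn:exog} give $\pi_2=0$, whence $\bias\var(X^{\perp W_1})=0$ and then $\gamma_{1,\text{long}}=\gamma_{1,\text{med}}$ by \eqref{eq:lemma1-1}.
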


The proof of Theorem \ref{thm:OsterProp3} and all other formal results can be found in Appendix \ref{sec:proofs}. Theorem \ref{thm:OsterProp3} is very similar to Oster's Proposition 3, with only minor differences. As Oster remarked, this result shows that, even though knowledge of $\delta$ does not point identify $\beta_\text{long}$, knowledge of $\beta_\text{long}$ does point identify $\delta$. Figure \ref{fig:intro} illustrates why: Fixing $\delta$ is equivalent to placing a vertical line on the plot. This vertical line may intersect the displayed curve up to three times, since $\mathcal{B}_I(\delta,R_\text{long}^2)$ can have up to three elements. Fixing the value $\beta_\text{long}$ to be equal to some known $\beta_\text{hypo}$, however, is equivalent to placing a horizontal line on the plot. This line can only intersect the displayed curve once. The absolute value of the horizontal coordinate at this intersection is precisely $\delta^{\text{bp},=}(\beta_\text{hypo}, R_\text{long}^2)$.

Setting $\beta_\text{hypo} = 0$, we get the \emph{explain away} breakdown point
\[
	\delta^{\text{bp},\text{explain away}}(R_\text{long}^2)
	\coloneqq
	\delta^{\text{bp},=}(0,R_\text{long}^2)
	=
	\frac{|f_0(\beta_\text{med})|}{|f_1(\beta_\text{med}, R_\text{long}^2)|}.
\]
Its signed value, $-f_0(\beta_\text{med})/f_1(\beta_\text{med} , R_\text{long}^2)$, is often called \emph{Oster's delta}. This is the unique value of $\delta$ that is compatible with an exactly zero long regression coefficient, $\beta_\text{long} = 0$. Estimates of this value are commonly reported as a measure of the robustness of the baseline model $(\delta = 0)$ to the presence of omitted variables. 

\section{Measuring the Robustness of the Sign of $\beta_\text{long}$ to Omitted Variables}\label{sec:OsterSignFlip}

In this section we analyze the robustness of conclusions about the sign of $\beta_\text{long}$ to the presence of omitted variables. In Section \ref{sec:deltaSignChangeBP} we first show that the sign change breakdown point for $\delta$ can never be larger than 1. This is a concern since $| \delta | = 1$ is often considered the cutoff for robustness, and therefore this result implies that conclusions about the sign of $\beta_\text{long}$ are always non-robust. In Section \ref{sec:modified} we propose a simple modification to Oster's analysis that allows the sign change breakdown point to be larger than 1.

\subsection{The Sign Change Breakdown Point for $\delta$}\label{sec:deltaSignChangeBP}

Oster's delta answers the following question:
\begin{itemize}
\item[] Suppose we knew the exact value of $R_\text{long}^2$. What is the smallest value of $|\delta|$ that is required for the data to be consistent with $\beta_\text{long} = 0$? 
\end{itemize}
We call this the \emph{explain away} breakdown point. Researchers may also be interested in answering the following question:
\begin{itemize}
\item[] Suppose we knew the exact value of $R_\text{long}^2$. What is the smallest value of $|\delta|$ such that $\beta_\text{long}$ has a different sign from $\beta_\text{med}$?
\end{itemize}
We call this the \emph{sign change} breakdown point. As mentioned earlier, the answers to these two questions can be different. In fact, as we next show, the sign change breakdown point is typically much smaller than the explain away breakdown point.

To formalize this claim, define the sign change breakdown point for a given value of $R_\text{long}^2$ as
\[
	\delta^{\text{bp},\text{sign}}(R_\text{long}^2) \coloneqq 
	\begin{cases}
		\delta^{\text{bp},>}(R_\text{long}^2)
			&\text{if $\beta_\text{med} > 0$} \\
		\delta^{\text{bp},<}(R_\text{long}^2)
			&\text{if $\beta_\text{med} < 0$}
	\end{cases}
\]
where
\begin{align*}
	\delta^{\text{bp},>}(R_\text{long}^2)
	&\coloneqq
	\inf \{ | \delta | : \delta \in \R, b \leq 0 \text{ for some } b \in \mathcal{B}_I(\delta, R_\text{long}^2) \}\\[0.5em]
	\delta^{\text{bp},<}(R_\text{long}^2)
	&\coloneqq \inf \{ | \delta | : \delta \in \R, b \geq 0 \text{ for some } b \in \mathcal{B}_I(\delta, R_\text{long}^2) \}.
\end{align*}

We now state our main theoretical result.

\begin{theorem}\label{thm:osterExplainAwayBounded}
Suppose \ref{assn:posdef}--\ref{assn:exog} hold. Suppose $\beta_\text{short} \neq \beta_\text{med}$. Then for any $R_\text{long}^2 \in (R_\text{med}^2, 1]$,
\[
	\delta^{\text{bp},\text{sign}}(R_\text{long}^2) \leq 1.
\]
\end{theorem}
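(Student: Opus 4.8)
The plan is to show that $\delta = 1$ (or a value arbitrarily close to it) always produces an element of $\mathcal{B}_I(\delta, R_\text{long}^2)$ whose sign differs from that of $\beta_\text{med}$, and in fact that the identified set ``escapes to infinity'' as $\delta \to 1$. First I would examine the cubic $f(B, \delta, R_\text{long}^2) = f_0(B) + \delta f_1(B, R_\text{long}^2) = 0$ defining the set, viewing $B = \beta_\text{med} - b$ as the bias. Both $f_0$ and $f_1$ have leading (cubic in $B$) coefficients proportional to $\var(X^{\perp W_1})\var(\pi_1'W_1)$: specifically $f_0$ contributes $-\var(X^{\perp W_1})\var(\pi_1'W_1) B^3$ and $\delta f_1$ contributes $+\delta\,\var(X^{\perp W_1})\var(\pi_1'W_1) B^3$. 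Hence at $\delta = 1$ the $B^3$ terms cancel exactly, so $f(B, 1, R_\text{long}^2)$ is a polynomial of degree at most $2$ in $B$. The key consequence: as $\delta \uparrow 1$, one root of the cubic diverges to $\pm\infty$ (the coefficient on $B^3$ is $(\delta - 1)\var(X^{\perp W_1})\var(\pi_1'W_1) \to 0$ while lower-order coefficients stay bounded away from a degenerate configuration, using $\beta_\text{short} \neq \beta_\text{med}$ to rule out degeneracies). Since a diverging root takes both arbitrarily large positive and arbitrarily large negative values along the two one-sided limits $\delta \to 1^-$ and $\delta \to 1^+$, for every target sign there is a $\delta$ with $|\delta|$ arbitrarily close to (or at most) $1$ realizing a $b \in \mathcal{B}_I(\delta, R_\text{long}^2)$ of that sign.

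The steps, in order: (1) Write out the coefficients of $f(\cdot, \delta, R_\text{long}^2)$ as a polynomial in $B$ and identify the $B^3$ coefficient as $(\delta-1)\var(X^{\perp W_1})\var(\pi_1'W_1)$, which is nonzero for $\delta \neq 1$ by A\ref{assn:posdef} and A\ref{assn:nonzero} (note $\var(\pi_1'W_1) \neq 0$ requires $\pi_1 \neq 0$, which follows from $\beta_\text{short} \neq \beta_\text{med}$ via the omitted-variable-bias formula relating the short and medium regressions). (2) Show the product of the three roots equals $-(\text{constant term})/((\delta - 1)\var(X^{\perp W_1})\var(\pi_1'W_1))$; argue the other two roots stay bounded as $\delta \to 1$ (e.g., they converge to the roots of the limiting quadratic $f(\cdot,1,R_\text{long}^2)$, provided that quadratic is nondegenerate — here again $\beta_\text{short}\neq\beta_\text{med}$ should guarantee the quadratic's leading coefficient is nonzero), so the third root must diverge. (3) Determine the sign of the diverging root as a function of whether $\delta \to 1^-$ or $\delta \to 1^+$, by tracking the sign of the ratio (constant term)$/(\delta - 1)$; conclude that one of the two one-sided limits sends the diverging root (hence the diverging $b = \beta_\text{med} - B$) to $-\infty$ and the other to $+\infty$. (4) Exclude the possibility that the diverging values all lie in the removed set $\tildeB$: since $\tildeB$ is finite (at most a singleton), it cannot contain an unbounded sequence of roots, so for $|\delta|$ close enough to $1$ the diverging root genuinely lies in $\mathcal{B}_I(\delta, R_\text{long}^2)$. (5) Combine: whichever sign is opposite to $\text{sign}(\beta_\text{med})$ is attained by $b$-values with $|\delta| \le 1$, giving $\delta^{\text{bp},\text{sign}}(R_\text{long}^2) \le 1$.

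The main obstacle I anticipate is step (2)–(3): making the ``one root diverges, the others stay bounded, and the sign of the diverging root flips across $\delta = 1$'' argument fully rigorous. One has to verify that the quadratic $f(\cdot, 1, R_\text{long}^2)$ is genuinely quadratic (nonzero leading coefficient) and has real roots near which the two bounded roots of the cubic settle — this is where the hypothesis $\beta_\text{short} \neq \beta_\text{med}$ does real work, presumably ensuring $\cov(X,\gamma_{1,\text{med}}'W_1) \neq 0$ or an analogous nondegeneracy — and that the constant term of $f(\cdot,\delta,R_\text{long}^2)$ does not itself vanish at $\delta = 1$ (else no root diverges). A clean way to organize this is to divide $f$ by its $B^3$ coefficient and apply continuity of polynomial roots in the coefficients, treating $1/(\delta-1)$ as the large parameter; alternatively, directly exhibit, for each small $\varepsilon > 0$, an explicit large $B$ with $f(B, 1-\varepsilon, R_\text{long}^2)$ and $f(B', 1-\varepsilon, R_\text{long}^2)$ of opposite signs (intermediate value theorem) to locate the root. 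I would also handle the boundary subtlety that the infimum defining $\delta^{\text{bp},\text{sign}}$ may not be attained (the opposite sign may only be realized in the limit $\delta \to 1$, not at $\delta = 1$ exactly), which is exactly why the bound is ``$\le 1$'' and connects to the vertical asymptote at $\delta = 1$ mentioned in the discussion of Figure~\ref{fig:intro}.
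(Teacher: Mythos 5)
Your proposal is correct in substance and rests on exactly the phenomenon the paper exploits---the leading coefficient of the cubic in $\bias$ is $(\delta-1)\var(X^{\perp W_1})\var(\pi_1'W_1)$, so a root escapes to infinity as $\delta\to 1$---but you run the argument in the opposite direction, and your direction is the harder one. You fix $\delta$ near $1$ and must then track which of the up-to-three roots of $f(\cdot,\delta,R_\text{long}^2)$ diverges, which is precisely the source of the obstacles you flag in steps (2)--(3) (continuity of roots, nondegeneracy of the limiting quadratic, signs of one-sided limits). The paper instead parametrizes the curve by the bias: it takes $B_n\uparrow\infty$ (and $-B_n$) and sets $\delta_n=-f_0(B_n)/f_1(B_n,R_\text{long}^2)$, the closed form from Theorem \ref{thm:OsterProp3}, so that $\beta_\text{med}-B_n$ lies in $\mathcal{B}_I(\delta_n,R_\text{long}^2)\cup\tildeB$ \emph{by construction} and $\delta_n\to 1$ follows from a one-line comparison of the cubic leading terms of $-f_0$ and $f_1$. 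That inversion is available because $b\mapsto\delta$ is single-valued even though $\delta\mapsto b$ is not, and it eliminates all of your anticipated difficulties. The supporting facts are the same in both arguments: $\var(\pi_1'W_1)>0$ and $\cov(X,\gamma_{1,\text{med}}'W_1)\neq 0$ both follow from $\beta_\text{short}\neq\beta_\text{med}$ via $\cov(X,\gamma_{1,\text{med}}'W_1)=(\beta_\text{short}-\beta_\text{med})\var(X)$; the sign of $\cov(X,\gamma_{1,\text{med}}'W_1)$ determines from which side of $1$ the relevant $\delta$'s approach (your step (3) corresponds to the paper's computation of $\text{sign}(\delta_n-1)$); $\tildeB$ is at most a singleton and so cannot absorb the diverging sequence; and the bound is an infimum that need not be attained, which you correctly identify as the reason the conclusion is $\leq 1$ rather than $<1$. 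If you do complete your version, the intermediate-value-theorem variant you sketch is the right tool for making the diverging-root claim rigorous, but the paper's inversion buys you a substantially shorter proof.
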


Here we assume $\beta_\text{short} \neq \beta_\text{med}$ for simplicity only. Theorem \ref{thm:osterExplainAwayBounded} shows that, in Oster's sensitivity analysis, the sign change breakdown point \emph{can never be larger than one}. In practice, researchers typically consider the value $|\delta| = 1$ to be the cutoff for determining the robustness of a result. That is, they interpret values of $|\delta|$ greater than 1 as robust and values less than 1 as sensitive. Based on this criterion, Theorem \ref{thm:osterExplainAwayBounded} therefore shows that the \emph{sign} of the parameter of interest is \emph{never robust} (assuming we require $|\delta|$ strictly larger than 1 to conclude that the result is robust). In contrast, the explain away breakdown point $\delta^{\text{bp},\text{explain away}}(R_\text{long}^2)$ is often substantially larger than one. We illustrate this difference empirically in Section \ref{sec:empirical}.

Figure \ref{fig:intro} illustrates this result: The graph $\{ (\delta,b) \in \R^2 : f(\beta_\text{med} - b, \delta, R_\text{long}^2) = 0 \}$ has an asymptote at $\delta = 1$. Consequently, values of $\delta$ close to one---but not equal to one---yield identified sets that contain arbitrarily large or arbitrarily small values. Hence for $\delta$ values near one we cannot point identify the sign of $\beta_\text{long}$. However, much larger values of $\delta$ may be needed before the graph crosses the horizontal axis $b = 0$ exactly. This explains why the explain away breakdown point can often be much larger than the sign change breakdown point for this model.

\subsubsection*{Intuition}

The asymptote at $\delta = 1$ arises because the sensitivity analysis allows the treatment $X$ and the covariates $(W_1,W_2)$ to be nearly multi-collinear. And, as we explain here, this near multi-collinearity implies that the omitted variable bias will be large while at the same time $\delta$ will be close to 1. We prove this formally in Proposition \ref{prop:formal_asym_intuition} in Appendix \ref{sec:intuitionAppendix}. To see it intuitively, consider a distribution of $(Y,X,W_1,W_2)$ that is consistent with the identified covariance matrix $\var(Y,X_1,W_1)$ but where the omitted variable $W_2$ almost fully accounts for the variation in treatment after adjusting for the observed variables $W_1$. There are three main steps:
\begin{enumerate}
\item In this case, $X$, $W_1$, and $W_2$ are close to multi-collinear. Consequently, the coefficients in the long regression, $(\beta_\text{long}, \gamma_{1,\text{long}}, \gamma_{2,\text{long}})$, will be large. We prove that this is a general feature of near multi-collinear OLS estimands in Appendix \ref{sec:OLSmultiCollinear}. If $\beta_\text{long}$ is large then the OVB must be large because $\beta_\text{med}$ is fixed (that is, its value is not affected by the omitted variable $W_2$).

\item $\delta$'s measure of selection on unobservables is the coefficient from regressing $X$ on $\gamma_{2,\text{long}} W_2$:
\begin{align}\label{eq:SOU_coeff}
	\frac{\cov(X,\gamma_{2,\text{long}}W_2)}{\var(\gamma_{2,\text{long}}W_2)}.
\end{align}
$\gamma_{2,\text{long}}$, as one of the long regression coefficients, will also be large because of near multi-collinearity. Consequently, this measure of selection on unobservables will be close to zero since regression coefficients scale down when the covariate is scaled up. More formally, note that $\cov(X,\gamma_{2,\text{long}}W_2)$ and $\var(\gamma_{2,\text{long}}W_2)$ both diverge to infinity as we approach perfect multi-collinearity since $|\gamma_{2,\text{long}}| \rightarrow \infty$ in this scenario. However their ratio (the regression coefficient in \eqref{eq:SOU_coeff}) converges to zero since its numerator is linear in $\gamma_{2,\text{long}}$ while its denominator is quadratic in $\gamma_{2,\text{long}}$.

\item For the same reason, $\delta$'s measure of selection on observables will also be close to zero, since $\gamma_{1,\text{long}}$ will be large. Since the numerator and denominator of $\delta$ are both small, it is not obvious how the ratio will behave. However, it turns out that both terms converge to zero at the same rate, which implies that $\delta$ converges to a constant. This follows from the OVB formula for the coefficient on $W_1$:
\[
	\gamma_{1,\text{long}} - \gamma_{1,\text{med}} = (\beta_\text{med} - \beta_\text{long}) C_1
\]
where $C_1$ is a constant, and the OVB formula for the coefficient on $W_2$:
\[
	\gamma_{2,\text{long}} - 0 =  (\beta_\text{med} - \beta_\text{long}) C_2
\]
where $C_2$ is a value that converges to a constant. We derive these equations in Lemma \ref{lemma:regfacts}. Thus both $\gamma_{2,\text{long}}$ and $\gamma_{1,\text{long}}$ diverge at the same rate, controlled by the divergence of $\beta_\text{long}$. The divergence of these terms is what drives the numerator and denominator of $\delta$ to zero. Finally, some additional regression algebra shows that the constant limit of $\delta$ is one.
\end{enumerate}
Thus in the first step we see that the OVB will be large while in the third step we see that $\delta$ will be close to one. This is precisely what happens along the asymptote at $\delta = 1$. This followed from (a) analyzing the behavior of near multi-collinear OLS coefficients and (b) exploring the specific structure of $\delta$. Thus the asymptote arises because $(X,W_1,W_2)$ are allowed to be arbitrarily mutli-collinear. Researchers may be interested in explicitly ruling out this collinearity. This is exactly what the alternative methods of \cite{CinelliHazlett2020} and \cite{DMP2025ident} do. Both papers include sensitivity parameters that explicitly bound the joint explanatory power of the observed and unobserved covariates in the first stage. The sensitivity analyses proposed by both papers then lead to sign change and explain away breakdown points that are equal.

\subsection{Incorporating Magnitude Restrictions on the Omitted Variable Bias}\label{sec:modified}

We have just shown that the sign change breakdown point for $\delta$ can never be larger than one. This negative result can be overturned by adding additional identifying assumptions. For example, there are usually values of the coefficient that are extremely implausible, such as a 500\% return to education. Any procedure that produced such an estimate would therefore suggest that there is a problem with the procedure itself, the data, or both. In this section we show that if researchers are willing to explicitly impose this kind of bound then we can obtain sign change breakdown points larger than one. Specifically, consider the following assumption.

\begin{Aassumption}\label{assn:magnRest}
$| \beta_\text{long} - \beta_\text{med} | \leq M$ for some known $M \geq 0$.
\end{Aassumption}

\ref{assn:magnRest} says that the magnitude of the omitted variable bias is no larger than $M$. This assumption can be viewed as formalizing Oster's \citeyearpar{Oster2019} statement that researchers may be ``willing to assume that the bias is fairly small'' (page 194). The value $M$ should be chosen to reflect the researcher's beliefs about the largest possible bias in their baseline estimand $\beta_\text{med}$. For example, consider a regression of log earnings $Y$ on years of schooling $X$ and other observed covariates $W_1$. Suppose our estimate of $\beta_\text{med}$ is 0.1. Then $M=1$ would say that the true coefficient on schooling---as obtained after adjusting for the unobserved confounders---lies between $-0.9$ and 1.1. This interval includes most estimates (e.g., \citealt{Card2001} Table II) so in this application researchers would likely view \ref{assn:magnRest} with $M=1$ as a plausible assumption.

In general, a simple approach to selecting $M$ is to let it be a multiple of $\beta_\text{med}$. That is, let $M = p \cdot | \beta_\text{med} |$ for $p \geq 0$. For this choice, \ref{assn:magnRest} is equivalent to
\[
	\beta_\text{long} \in \big[ \beta_\text{med} - p \cdot|\beta_\text{med}|, \beta_\text{med} + p\cdot|\beta_\text{med}| \big].
\]
All elements of this set have the same sign if $p < 1$. For this reason, since we are interested in robustness to sign changes, $p$ should be chosen to be larger than 1. In the returns to schooling example with $\beta_\text{med} = 0.1$, $M=1$ is equivalent to the choice $p=10$. We recommend that researchers consider a range of magnitude restrictions, as we illustrate empirically in Section \ref{sec:empirical}. For example, in Section \ref{sec:empirical} we consider $p = 2$, $3$, and $10$. 

The next result shows the impact of this magnitude restriction on identification of $\beta_\text{long}$.

\begin{corollary}\label{cor:magnitudeRestrictions}
Suppose the joint distribution of $(Y,X,W_1)$ is known. Suppose \ref{assn:posdef}--\ref{assn:magnRest} hold. Suppose $R_\text{long}^2 \in (R_\text{med}^2,1]$ is known. Recall that \ref{assn:delta} specifies that $\delta \in \R$ is known. Denote the identified set for $\beta_\text{long}$ under these assumptions by $\mathcal{B}_I(\delta, R_\text{long}^2, M)$. Then
\[
	\mathcal{B}_I(\delta, R_\text{long}^2, M) = \mathcal{B}_I(\delta, R_\text{long}^2) \cap [\beta_\text{med} - M, \beta_\text{med} + M].
\]
\end{corollary}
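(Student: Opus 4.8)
The plan is to reduce the claim to Theorem \ref{thm:IDset} together with the observation that A\ref{assn:magnRest} is a \emph{separable} restriction: it constrains $\beta_\text{long}$ only through its distance to $\beta_\text{med}$, and $\beta_\text{med}$ is a function of the known joint distribution of $(Y,X,W_1)$ alone (it is the coefficient on $X$ in the regression of $Y$ on $(1,X,W_1)$, which is well defined by A\ref{assn:posdef}). Hence $\beta_\text{med}$ is point identified and is held fixed across every completion of the model by an unobserved $W_2$. Adding A\ref{assn:magnRest} to the maintained assumptions therefore amounts to intersecting the identified set of Theorem \ref{thm:IDset} with the \emph{fixed} interval $[\beta_\text{med}-M,\beta_\text{med}+M]$. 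I would prove the two inclusions separately to make the role of this point identification explicit.

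For the ``$\subseteq$'' direction: let $b$ be any value of $\beta_\text{long}$ consistent with the known distribution of $(Y,X,W_1)$ and A\ref{assn:posdef}--A\ref{assn:magnRest}. Since A\ref{assn:posdef}--A\ref{assn:norm} are among these assumptions, Theorem \ref{thm:IDset} gives $b \in \mathcal{B}_I(\delta,R_\text{long}^2)$. And A\ref{assn:magnRest}, applied to the corresponding completion, gives $|b-\beta_\text{med}|\le M$, i.e.\ $b\in[\beta_\text{med}-M,\beta_\text{med}+M]$. Thus $b$ lies in the intersection, and in particular the removal of $\tildeB$ is already incorporated through $\mathcal{B}_I(\delta,R_\text{long}^2)$.

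For the ``$\supseteq$'' direction, which is the sharpness half and the only place where a little care is needed: take any $b\in\mathcal{B}_I(\delta,R_\text{long}^2)\cap[\beta_\text{med}-M,\beta_\text{med}+M]$. By the sharpness in Theorem \ref{thm:IDset} (exact within $\mathcal{B}_I(\delta,R_\text{long}^2)$, since that set already has $\tildeB$ subtracted), there is a joint distribution of $(Y,X,W_1,W_2)$ whose $(Y,X,W_1)$-marginal equals the known one, which satisfies A\ref{assn:posdef}--A\ref{assn:norm}, and under which $\beta_\text{long}=b$. Because the $(Y,X,W_1)$-marginal is unchanged, the medium-regression coefficient computed from this completion is still $\beta_\text{med}$, so $|\beta_\text{long}-\beta_\text{med}|=|b-\beta_\text{med}|\le M$ and A\ref{assn:magnRest} holds as well. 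Hence $b$ is consistent with all of A\ref{assn:posdef}--A\ref{assn:magnRest} and the known distribution, so $b$ is in the identified set.

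The only (minor) obstacle is verifying that the completion supplied by the sharpness argument of Theorem \ref{thm:IDset} does not perturb $\beta_\text{med}$; but this is immediate because every admissible completion fixes the distribution of $(Y,X,W_1)$, of which $\beta_\text{med}$ is a function. The exceptional set $\tildeB$ requires no new treatment: it is excluded for exactly the reason given after Theorem \ref{thm:IDset} (those $b$ force $\gamma_{1,\text{long}} = 0$, violating A\ref{assn:nonzero}), and A\ref{assn:magnRest}, being a constraint on $b$ alone, does not interact with that argument. One could compress the whole proof to a single sentence---the identified set under a separable additional restriction is the intersection of the old identified set with the restriction set---but spelling out the two inclusions as above pins down precisely where the point identification of $\beta_\text{med}$ is used.
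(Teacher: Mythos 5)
Your proof is correct and follows essentially the same route as the paper's: apply Theorem \ref{thm:IDset} plus A\ref{assn:magnRest} for the forward inclusion, and invoke the sharpness construction of Theorem \ref{thm:IDset} for the reverse, noting that A\ref{assn:magnRest} holds automatically for that completion since $\beta_\text{med}$ is pinned down by the observed $(Y,X,W_1)$ distribution. Your version simply spells out the two inclusions more explicitly than the paper's one-line argument.
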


Let $\delta^{\text{bp},\text{sign}}(R_\text{long}^2, M)$ denote the corresponding sign change breakdown point. For $0 \leq M < \infty$, it is now possible for $\delta^{\text{bp},\text{sign}}(R_\text{long}^2, M) > 1$. We illustrate this empirically in Section \ref{sec:empirical}. 

Thus far we have followed \cite{Oster2019} by assuming the exact value of $\delta$, the measure of selection is known. We extend previous results by considering a known \textit{bound} on the measure of selection, rather than its exact value. Specifically, we make the following assumption.

\begin{Aassumption}\label{assump:deltaBarBound}
There is a known constant $\bar{\delta} \geq 0$ such that $| \delta | \leq \bar{\delta}$ for $\delta$ that satisfies equation \eqref{eq:mainDeltaEquation}.
\end{Aassumption}

This leads to the following identification result.

\begin{corollary}\label{cor:cum_IDset}
Suppose the joint distribution of $(Y,X,W_1)$ is known. Suppose \ref{assn:posdef}, \ref{assn:nonzero}, \ref{assn:exog}, and \ref{assump:deltaBarBound} hold. Suppose $R_\text{long}^2 \in (R_\text{med}^2,1]$ is known. Denote the identified set for $\beta_\text{long}$ under these assumptions by $\overline{\mathcal{B}}_I(\bar{\delta}, R_\text{long}^2)$. Then
\[
	\overline{\mathcal{B}}_I(\bar{\delta}, R_\text{long}^2) = \bigcup_{\delta : | \delta | \leq \bar{\delta}} \mathcal{B}_I(\delta,R_\text{long}^2).
\]
Suppose \ref{assn:magnRest} also holds. Denote the identified set for $\beta_\text{long}$ under this as well as the earlier assumptions by $\overline{\mathcal{B}}_I(\bar{\delta}, R_\text{long}^2, M)$. Then
\[
	\overline{\mathcal{B}}_I(\bar{\delta}, R_\text{long}^2, M) = \bigcup_{\delta : | \delta | \leq \bar{\delta}} \mathcal{B}_I(\delta,R_\text{long}^2, M).
\]
\end{corollary}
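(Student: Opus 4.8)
The plan is to prove Corollary \ref{cor:cum_IDset} by reducing it to Theorem \ref{thm:IDset} (and Corollary \ref{cor:magnitudeRestrictions}) together with a straightforward union-of-identified-sets argument. The key observation is that under A\ref{assump:deltaBarBound} the parameter space for the unobservable model primitives is exactly the union, over all $\delta$ with $|\delta| \leq \bar{\delta}$, of the parameter spaces that are consistent with that fixed value of $\delta$. Since the identified set for $\beta_\text{long}$ under a given data distribution is, by definition, the set of all values of $\beta_\text{long}$ attainable by some model primitive in the parameter space, the identified set under A\ref{assump:deltaBarBound} is the union of the identified sets obtained by fixing each admissible $\delta$.

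First I would set up the model more explicitly: a ``model'' consists of a joint distribution of $(Y,X,W_1,W_2)$ satisfying A\ref{assn:posdef}--A\ref{assn:norm}, and the observed data pins down the joint distribution of $(Y,X,W_1)$. For a fixed known $\delta$, Theorem \ref{thm:IDset} characterizes the identified set $\mathcal{B}_I(\delta, R_\text{long}^2)$ as exactly the set of $\beta_\text{long}$ values arising from some such model that additionally satisfies the $\delta$-equation \eqref{eq:mainDeltaEquation} with that $\delta$ (this is precisely the sharpness statement, which I am allowed to assume). Next I would note that A\ref{assump:deltaBarBound}, which only requires $|\delta| \leq \bar{\delta}$, imposes no constraint beyond the existence of \emph{some} admissible $\delta$ in $[-\bar{\delta}, \bar{\delta}]$: a model satisfies A\ref{assump:deltaBarBound} if and only if it satisfies A\ref{assn:delta} for some $\delta$ with $|\delta| \leq \bar{\delta}$. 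Therefore $\beta_\text{long}$ is attainable under A\ref{assump:deltaBarBound} if and only if it is attainable under A\ref{assn:delta} for some such $\delta$, which gives
\[
	\mathcal{B}_I(\bar{\delta}, R_\text{long}^2) = \bigcup_{|\delta| \leq \bar{\delta}} \mathcal{B}_I(\delta, R_\text{long}^2).
\]
For the second claim I would simply intersect everything with the constraint $|\beta_\text{long} - \beta_\text{med}| \leq M$ coming from A\ref{assn:magnRest}: since this constraint is on $\beta_\text{long}$ directly and does not interact with the value of $\delta$, intersecting commutes with the union, and combining with Corollary \ref{cor:magnitudeRestrictions} (which states $\mathcal{B}_I(\delta, R_\text{long}^2, M) = \mathcal{B}_I(\delta, R_\text{long}^2) \cap [\beta_\text{med}-M, \beta_\text{med}+M]$) yields the second displayed equation.

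The main subtlety — what I would be most careful about — is the direction of the argument showing that fixing an intermediate $\delta$ is ``free'': I need that for every $\beta_\text{long}$ in some $\mathcal{B}_I(\delta, R_\text{long}^2)$ with $|\delta| \leq \bar{\delta}$, there genuinely exists a full model (distribution of $W_2$ and coefficients) realizing both that $\beta_\text{long}$ and that $\delta$ and satisfying all of A\ref{assn:posdef}--A\ref{assn:norm}. This is exactly the content of the sharpness half of Theorem \ref{thm:IDset}, so no new construction is needed; I would just cite it. The only genuinely new (but routine) point is verifying that A\ref{assn:magnRest}'s restriction passes through the union unchanged, i.e. that $\bigcup_\delta (\mathcal{B}_I(\delta, R_\text{long}^2) \cap I) = \big(\bigcup_\delta \mathcal{B}_I(\delta, R_\text{long}^2)\big) \cap I$ for the fixed interval $I = [\beta_\text{med}-M, \beta_\text{med}+M]$, which is immediate since $\beta_\text{med}$ and $M$ do not depend on $\delta$. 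I would also remark that the union is over a nonempty set (it always contains $\delta = 0$, handled by Corollary \ref{cor:baselineCase}), so the identified set is nonempty and contains $\beta_\text{med}$.
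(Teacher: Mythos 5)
Your proposal is correct and follows the same route as the paper, whose entire proof is the one-line statement that the result ``follows directly from Theorem \ref{thm:IDset} and Corollary \ref{cor:magnitudeRestrictions}''; you have simply made explicit the union-over-admissible-$\delta$ argument and the role of the sharpness half of Theorem \ref{thm:IDset}, which is exactly what that citation is implicitly invoking. Nothing further is needed.
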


This corollary simply notes that we take the union of the identified sets from equation \eqref{eq:ID_set_expression} over all allowed values of the sensitivity parameter $\delta$ with magnitude less than $\bar{\delta}$ to obtain the identified set under \ref{assump:deltaBarBound}. 

\section{The Sensitivity of Bias Adjusted Estimands}\label{sec:biasAdjustment}

Thus far we have focused on breakdown analysis. Researchers also commonly compute point estimates that adjust for omitted variable bias. See our survey in Appendix \ref{sec:inPractice} for examples from the empirical economics literature. The most commonly used bias adjustment is an estimate of
\begin{equation}\label{eq:OsterP1}
	\beta_\text{long} =
	\beta_\text{med} + (\beta_\text{med} - \beta_\text{short}) \frac{R_\text{long}^2 - R_\text{med}^2}{R_\text{med}^2 - R_\text{short}^2}.
\end{equation}
Oster's Proposition 1 shows that this equation holds if $\delta = 1$ and an auxiliary assumption holds (see \ref{assump:OsterA2} in Appendix \ref{sec:BiasCorrectionsAppendix}). However, equation \eqref{eq:OsterP1} does not hold, even in an approximate sense, if $\delta$ is very close to one, but not exactly equal to one. Instead, $\beta_\text{long}$ can be arbitrarily large even for values of $\delta$ very close to one. As discussed earlier, this is due to the vertical asymptote in the identified set that occurs precisely at $\delta = 1$. Consequently, bias adjustments based on equation \eqref{eq:OsterP1} do not accurately represent the magnitude of omitted variable bias for values of $\delta$ close to one. We illustrate this discrepancy in Table \ref{table:biasAdjustmentSensitivity} in our empirical application. We discuss this issue in more detail in Appendix \ref{sec:BiasCorrectionsAppendix}. We also discuss several other issues with how bias adjustment is done in practice in Appendix \ref{sec:commonPitfallsAppendix}.

Rather than performing bias adjustments, researchers can plot the identified set $\overline{\mathcal{B}}_I(\bar{\delta}, R_\text{long}^2)$ for $\beta_\text{long}$ as a function of $\bar{\delta} \geq 0$. We described this set in Section \ref{sec:modified}. This set describes the values of $\beta_\text{long}$ consistent with any amount of selection on unobservables $\delta$ with magnitude less than $\bar{\delta}$. Hence, unlike a single point estimand for $\delta = 1$, it shows how the feasible values of $\beta_\text{long}$ vary for a range of $\delta$ values. See Figure \ref{fig:osterCumulative} for an example.

\section{Interpreting $\delta$ and Additional Critiques}\label{sec:delta_discussion}

We have focused on the properties of various ways of assessing omitted variable bias that use $\delta$ as the main sensitivity parameter. The previous literature has also critiqued the use of $\delta$ itself as a sensitivity parameter. We briefly discuss those and several other critiques in this section.

\subsection{$\delta$ is a Double Ratio}\label{sec:doubleRatioDelta}

\citet[Section 6.3]{CinelliHazlett2020} show that $\delta$ can be interpreted as a double ratio, and use this result to argue that $\delta$ does not accurately measure selection on unobservables versus selection on observables. We summarize their analysis here. For simplicity, consider the case where $W_1$ and $W_2$ are scalar and are normalized to have variance 1. Maintain assumptions \ref{assn:posdef}--\ref{assn:exog} from earlier. Let
\begin{align}\label{eq:proj_eqn}
	X &= \pi_1 W_1 + \pi_2 W_2 + X^{\perp W},
\end{align}
where $(\pi_1,\pi_2)$ are the coefficients on $(W_1,W_2)$ in the linear projection of $X$ on $(1,W_1,W_2)$, and $X^{\perp W}$ is the intercept plus the projection residual. Note that $\pi_1$ is also the coefficient on $W_1$ in the linear projection of $X$ on $(1,W_1)$, since we assumed $W_1$ and $W_2$ are uncorrelated.

Substituting equation \eqref{eq:proj_eqn} into the definition of $\delta$ (equation \eqref{eq:deltaMainDefinition}) shows that $\delta$ can be written as a double ratio:
\begin{align*}
	\delta &= \frac{ \cov(X, \gamma_{2,\text{long}} W_2) }{ \var(\gamma_{2,\text{long}} W_2) } \Bigg/ \frac{\cov(X,\gamma_{1,\text{long}} W_1) }{\var(\gamma_{1,\text{long}} W_1)} = \frac{\pi_2}{\pi_1} \Big/ \frac{\gamma_{2,\text{long}}}{\gamma_{1,\text{long}}}.
\end{align*}
The numerator measures the relative strength of two associations: \emph{treatment} and the omitted variable ($\pi_2$), and treatment and the observed variable ($\pi_1$). Similarly, the denominator measures the relative strength of two associations: the \emph{outcome} and the omitted variable ($\gamma_{2,\text{long}}$), and the outcome and the observed variable ($\gamma_{1,\text{long}}$). Hence $\delta$ compares the relative importance of unobservables and observables for treatment to their relative importance for outcomes.

This property that $\delta$ is a ratio of ratios leads to some counter-intuitive interpretations of the magnitude of $\delta$. For example, suppose $\gamma_{2,\text{long}} = c \cdot \gamma_{1,\text{long}}$ and $\pi_2 = c \cdot \pi_1$ for $c = 10$. Since both $W_1$ and $W_2$ are normalized to have unit variance, this value of $c$ indicates that $W_2$ has an association with $X$ (and $Y$) that is 10 times stronger than that of $W_1$ with $X$ (and $Y$). However, the implied value of $\delta$ is 1, which is traditionally interpreted as meaning that there is ``equal selection'' based on $W_2$ and on $W_1$. This value of $\delta$ is unchanged if $c = 1/10$, in which case $W_2$'s association with $X$ and $Y$ is 10 times \textit{weaker} than $W_1$'s association with those same variables. In fact, the value of $\delta$ is equal to 1 for all $c \neq 0$ in this example. As \cite{CinelliHazlett2020} point out, the interpretation of $\delta$ in some papers is at odds with its double ratio property. For example \citet[page 192]{Oster2019} states that $\delta = 1$ implies that ``the unobservable[s] and observables are equally related to the treatment.'' Their paper recommends continuing to use $R_\text{long}^2$ as one of the sensitivity parameters, but that $\delta$ should be replaced with $R^2_{X \sim W_2 \sbullet W_1}$, where $R^2_{X \sim W_2 \sbullet W_1}$ is the unknown R-squared in a regression of $X$ on $W_2$, after partialling out $W_1$. Moreover, the sensitivity analysis that uses this alternative parameterization has identical sign change and explain away breakdown points.

\subsection{Exogeneity of the Controls}

All of the sensitivity analyses we discussed in sections \ref{sec:OsterReview}--\ref{sec:biasAdjustment} maintain Assumption \ref{assn:exog}. That assumption requires all observed variables to be uncorrelated with all omitted variables. In many applications, this assumption would be viewed as strong---most observed variables are themselves correlated and it is unlikely that the omitted variables happen to be uncorrelated with all of the observed covariates, especially when many observed covariates are included in the analysis. Oster argues this assumption can be relaxed by redefining $W_2$ to be $W_2^{\perp W_1} := W_2 - \cov(W_2,W_1)\var(W_1)^{-1}W_1$, which is uncorrelated with $W_1$ by definition. However, as \cite{DMP2025axiom} argue, this redefinition of $W_2$ substantively changes the interpretation of $\delta$. The interpretation changes because the replacement of $W_2$ by $W_2^{\perp W_1}$ requires replacing the coefficients $(\gamma_{1,\text{long}},\gamma_{2,\text{long}})$ with $(\widetilde{\gamma}_1,\gamma_{2,\text{long}})$ where $\widetilde{\gamma}_1 := \gamma_{1,\text{long}} + \var(W_1)^{-1}\cov(W_1,W_2)\gamma_{2,\text{long}}$. The redefined $\delta$ is thus not only a function of the associations between $(W_1,W_2)$ and $X$ and $Y$, as discussed above, but also of the unknown covariance between $W_2$ and $W_1$. Concretely, for scalar $W_1$ and $W_2$, the redefined $\delta$ is
\begin{align*}
	\widetilde{\delta}
	&\coloneqq \frac{\cov(X,\gamma_{2,\text{long}} W_2^{\perp W_1})}{\var(\gamma_{2,\text{long}}W_2^{\perp W_1})} \Bigg/ \frac{\cov(X,\widetilde{\gamma}_1 W_1)}{\var(\widetilde{\gamma}_1 W_1)}\\
	&= \frac{\pi_2}{\pi_1  + \pi_2 \frac{\cov(W_2,W_1)}{\var(W_1)}} \Bigg/ \frac{\gamma_{2,\text{long}}}{\gamma_{1,\text{long}}  + \gamma_{2,\text{long}} \frac{\cov(W_2,W_1)}{\var(W_1)}}.
\end{align*}
This approach implies that applied researchers have to either (a) assume unobserved confounders are uncorrelated with all controls, or (b) reason about a parameter that depends on this unknown covariance. (a) is typically considered implausible, while (b) is challenging since this redefined $\delta$ is difficult to interpret.

This concern can be addressed by using an alternative parameterization of the omitted variable bias formula, such as those in \cite{CinelliHazlett2020} or \cite{DMP2025ident}. The parameterization in \cite{CinelliHazlett2020} continues to use $R_\text{long}^2$ but replaces $\delta$ with $R_{X \sim W_2 \sbullet W_1}^2$ (see Section \ref{sec:doubleRatioDelta} above). The parameterization in \cite{DMP2025ident} replaces both $\delta$ and $R_\text{long}^2$ with $r_X^2 \coloneqq \var(\pi_2'W_2)/\var(\pi_1'W_1)$ and $r_Y^2 \coloneqq \var(\gamma_{2,\text{long}}'W_2)/\var(\gamma_{1,\text{long}}'W_1)$, which are measures of the relative importance of $W_2$ and $W_1$ in the selection and outcome equation, respectively. In addition to addressing this concern about endogenous controls, both parameterizations also lead to sensitivity analyses where the sign change and explain away breakdown points are the same.

\subsection{The Role of $R_\text{long}^2$}\label{sec:R2longChoice}

All of the analysis above assumes the exact value of $R_\text{long}^2$ is known. Researchers should also assess how their conclusions depend on the choice of $R_\text{long}^2$. This is not common practice, however; see our survey analysis in Appendix \ref{sec:inPractice}. Instead, researchers commonly focus on one of two choices, either $1.3 \widehat{R}_\text{med}^2$ or 1. The choice $R_\text{long}^2 = 1$ is often thought to be the most conservative. However, \citet[propositions 6--8]{Basu2022} shows that this is not correct. Specifically, he shows that the explain away breakdown point $\delta^{\text{bp},=}(\beta_\text{hypo},R_\text{long}^2)$ is \emph{not} necessarily decreasing in $R_\text{long}^2$. Consequently, according to this measure of robustness, \emph{smaller} values of $R_\text{long}^2$ could lead to smaller values of the breakdown point, and hence less robust results. This fact can be overlooked if researchers only use a single value of $R_\text{long}^2$. For researchers who want to use the sensitivity parameters $\delta$ and $R_\text{long}^2$ we recommend plotting estimates of the identified sets $\overline{\mathcal{B}}_I(\bar{\delta}, R_\text{long}^2)$ for a range of $R_\text{long}^2$ values. Although we omit this analysis for brevity, researchers could also consider replacing the known $R_\text{long}^2$ assumption with an assumption that it is only known to be bounded, $R_\text{long}^2 \in [R_\text{med}^2, \overline{R}_\text{long}^2]$ for a known value of $\overline{R}_\text{long}^2$, and then reporting the identified set for $\beta_\text{long}$ as a function of $\overline{R}_\text{long}^2$ and the upper bound $\overline{\delta}$ on $\delta$.

\subsection{Optimistic Bias Correction}

Finally, \citet[page 220]{DeLucaMagnusPeracchi2019} note that, when there are multiple elements of \eqref{eq:sol_to_cubic_eqn} (values of $\beta_\text{long}$ consistent with the data and assumptions), by default the Stata package \texttt{psacalc} only reports the element that is \emph{closest} to $\beta_\text{med}$. Since the other elements are equally consistent with the data and assumptions, they note that there is no a priori reason to believe that the true magnitude of omitted variable bias happens to be the smallest one. This concern is also addressed by plotting the set $\overline{\mathcal{B}}_I(\bar{\delta}, R_\text{long}^2)$ rather than performing bias corrections.

\def\mystrut{\rule{0pt}{1.25\normalbaselineskip}}
\section{Empirical Application: Social Capital and the Rise of the Nazi Party}\label{sec:empirical}

Social capital is typically viewed as a good which leads to positive outcomes for countries. \citet*[\emph{JPE}]{SatyanathVoigtlanderVoth2017} instead argued that ``social capital can undermine and help to destroy a democratic system'' (page 482) by studying the rise of the Nazi Party (the National Socialist German Worker's Party [NSDAP]) in 1933. To do this, they analyzed a dataset of $n=229$ towns/cities in Germany. The outcome variable $Y$ is a measure of how many people in that city joined the Nazi party. The treatment variable $X$ is a measure of social capital, based on ``association density,'' the number of social clubs and associations per 1,000 city inhabitants. They consider three variations of this measure based on the type of association included: All associations, Civic only, and Military only. They also include a variety of socioeconomic and political covariates.
 
Their paper has a variety of analyses. We focus on their main results, which use a selection on observables identification strategy paired with linear models. That is, they estimate linear regressions of $Y$ on $X$, an intercept, and additional observed covariates. To assess the robustness of those results to omitted variables, they applied \cite{Oster2019} in their Subsection 4.D, with additional details in their Appendix G. We re-examine that analysis in light of our theoretical results in Section \ref{sec:OsterSignFlip}. See Appendix \ref{sec:extraEmpirical_details} for extra details and results for this empirical application.

\subsubsection*{Breakdown Analysis}

\begin{table}[!t]
\centering
\SetTblrInner[talltblr]{rowsep=0pt}
\resizebox{0.8\textwidth}{!}{
\footnotesize
\begin{talltblr}[
  caption = {Different Types of Breakdown Points for Assessing the Robustness of \cite{SatyanathVoigtlanderVoth2017} Results to Omitted Variables.\label{table:mainTable1}},
  remark{Note} = {For all specifications, the $W_0$ includes the baseline controls while the calibration covariates $W_1$ includes the socioeconomic and political controls. See Appendix \ref{sec:extraEmpirical_details} for a complete list of these variables. For comparison with column (1), in column (2) we also report the sign of the estimate obtained from Theorem \ref{thm:OsterProp3}. Columns (1) and (2) show $\widehat{\delta}^{\text{bp},\text{explain away}}(R_\text{long}^2)$. Column (3) shows the sign change breakdown point $\widehat{\delta}^{\text{bp},\text{sign}}(R_\text{long}^2, +\infty) = \widehat{\delta}^{\text{bp},\text{sign}}(R_\text{long}^2)$. Columns (4)--(6) show $\widehat{\delta}^{\text{bp},\text{sign}}(R_\text{long}^2, M)$ for various values of $M$.
  },
]{c | c *{6}{>{\centering}p{0.075\textwidth}}}
  \toprule
\mystrut
 & & \multicolumn{2}{c}{Explain Away} & \multicolumn{4}{c}{Sign Change with $M$ equal to}  \\[0.4em]
& & Reported & Correct & $+ \infty$ & $10 | \widehat{\beta}_\text{med} |$ & $3 | \widehat{\beta}_\text{med} | $ & $2 | \widehat{\beta}_\text{med} |$ \\[4pt]
  & $R_\text{long}^2$ & (1) & (2) & (3) & (4) & (5)  & (6)\\[4pt]
\hline
\mystrut
\multirow{2}{*}{All} & $1$ & $<0$ & $-32.0$ & 0.586 & 0.586 & 0.586 & 0.676  \\
 & $1.3 \widehat{R}_\text{med}^2$ & - & $-331.5$ & 0.953 & 0.954 & 1.476 & 2.829   \\[0.4em] 
\hline
\mystrut
\multirow{2}{*}{Civic} &  $1$ & $<0$ & 0.99 & 0.736 & 0.736 & 0.736 & 0.745 \\
 & $1.3 \widehat{R}_\text{med}^2$ & - & 9.78 & 0.994 &1.036  & 1.806 & 2.986  \\[0.4em] 
\hline
\mystrut
\multirow{2}{*}{Military} &  $1$ & $<0$ & 1.42 & 0.899 & 0.899 & 0.909 & 0.977  \\
 & $1.3 \widehat{R}_\text{med}^2$& - & 10.65 & 0.993 & 1.035 & 1.821 & 3.058    \\[2pt] 
\bottomrule
\end{talltblr}
}
\end{table}

Table \ref{table:mainTable1} shows the main breakdown results. This table replicates and extends the bottom half of Panel B in Table A.27 of Appendix G in \cite{SatyanathVoigtlanderVoth2017}. The three panels of Table \ref{table:mainTable1} correspond to the three different versions of treatment. All of the cell entries are different kinds of breakdown points. First consider columns (1) and (2). These show estimates of Oster's explain away breakdown point. We present estimates for two different choices of $R_\text{long}^2$, either 1 or Oster's rule of thumb $1.3 \widehat{R}_\text{med}^2$. Column (1) shows the values reported by \cite{SatyanathVoigtlanderVoth2017}. As described in Appendix \ref{sec:commonMistake}, these values are incorrect. Column (2) reports the correct values. If we judged robustness by comparing the magnitude of these values of the explain away breakdown point to 1, then all of the results would be deemed robust except one, civic associations with $R_\text{long}^2 = 1$, which gives an explain away breakdown point of 0.98.

However, as we have discussed in this paper, the explain away breakdown point can be very different from the sign change breakdown point. Columns (3)--(6) show estimated sign change breakdown points, for four increasingly restrictive choices of the constraint $M$ on the magnitude of OVB: $+ \infty$ (no constraint), $10 | \widehat{\beta}_\text{med} |$, $3 | \widehat{\beta}_\text{med} |$, and $2 | \widehat{\beta}_\text{med} |$. First consider column (3), which does not impose any magnitude restriction. As implied by Theorem \ref{thm:osterExplainAwayBounded}, we see that all of the point estimates are bounded above by 1. 

As explained in Section \ref{sec:modified}, we can obtain sign change breakdown points larger than 1 if we impose additional restrictions. Columns (4)--(6) show the estimated sign change breakdown points when we restrict the magnitude of OVB to be below $M$. For the relatively unrestrictive value of $M = 10 | \widehat{\beta}_\text{med} |$, this additional restriction does not change the sign change breakdown points much. However, it is enough to push two of the breakdown points slightly above 1. For the more restrictive values $M = 3 | \widehat{\beta}_\text{med} |$ and $M = 2 | \widehat{\beta}_\text{med} |$, the breakdown points are now all moderately larger than 1 for the rule of thumb choice $R_\text{long}^2 = 1.3 \widehat{R}_\text{med}^2$. For $R_\text{long}^2 = 1$, however, the magnitude restrictions do not affect the values of the sign change breakdown points very much. Finally, notice that even with the strictest magnitude restriction, the sign change breakdown points (column 6) are substantially smaller than the explain away breakdown points (column 2).

\subsubsection*{Selection Bias Adjustments}

\begin{table}[!t]
\centering
\resizebox{0.82\textwidth}{!}{
\begin{talltblr}[caption = {The Sensitivity of Bias Adjustments. \label{table:biasAdjustmentSensitivity}}, remark{Note} = {All estimates are for the ``all associations'' specification from \cite{SatyanathVoigtlanderVoth2017}. All sensitivity methods use $R_\text{long}^2 = 1$. See Section \ref{sec:BiasCorrectionsAppendix} for a discussion and definition of \ref{assump:OsterA2}.}]
{p{0.48\textwidth}>{\centering} *{1}{>{\centering}p{0.18\textwidth}} *{1}{>{\centering}p{0.40\textwidth}}
}
  \toprule
Method & Assumptions & $\beta_\text{long}$ Values \\[2pt]
  \midrule
  \mystrut
  \multicolumn{3}{l}{A. Current Practice} \\[4pt]
\hline
\mystrut
Baseline Estimate & $\delta = 0$ & 0.17 \\
Oster's Prop 1 (our Prop \ref{prop:OsterP1}) & $\delta = 1$, \ref{assump:OsterA2} & 0.532 \\[4pt]
\midrule
\mystrut
  \multicolumn{3}{l}{B. Sensitivity to $\delta$ Assumptions} \\[4pt]
 \hline
 \mystrut
Oster's Prop 2 (our Thm \ref{thm:IDset}), $\widehat{\mathcal{B}}_I(\delta, R_\text{long}^2)$ & $\delta = 1$ & $\{ -0.0855, \,1.8947 \}$ \\
Oster's Prop 2, $\widehat{\mathcal{B}}_I(\delta, R_\text{long}^2)$ & $\delta = 0.99$ & $\{ -18.66, \, -0.0868, \, 1.736 \}$ \\
Oster's Prop 2, $\widehat{\mathcal{B}}_I(\delta, R_\text{long}^2)$ & $\delta = 1.01$ & $\{  -0.0843,  \,  2.133, \,  15.64 \}$ \\
Corollary \ref{cor:cum_IDset}, $\widehat{\overline{\mathcal{B}}}_I(\bar{\delta}, R_\text{long}^2)$ & $| \delta | \leq \bar{\delta} = 1$ & $(-\infty, -0.0855] \cup [0.0432,1.8947]$ \\
Corollary \ref{cor:cum_IDset}, $\widehat{\overline{\mathcal{B}}}_I(\bar{\delta}, R_\text{long}^2)$ & $| \delta | \leq \bar{\delta} = 1.05$ & $(-\infty,-0.0798] \cup [0.0415,\infty)$ \\[4pt]
\bottomrule
\end{talltblr}
} 
\end{table}

\cite{SatyanathVoigtlanderVoth2017} did not present selection bias adjusted estimates; as we document in Appendix \ref{sec:inPractice}, it is rare for empirical researchers to present both breakdown analysis and bias adjustments. To illustrate the results of Section \ref{sec:biasAdjustment}, here we also use bias adjustments to analyze the sensitivity of the results in \cite{SatyanathVoigtlanderVoth2017}.

Table \ref{table:biasAdjustmentSensitivity} summarizes our bias adjustment results. For brevity, here we focus on just the ``all associations'' treatment variable and the specification in column (2) of appendix Table \ref{table:mainTable0}, and we only consider $R_\text{long}^2 = 1$. Our findings are similar for the other treatment variables, specifications, and $R_\text{long}^2$ values. For simplicity we also do not present confidence sets for these point estimates. Panel A represents current empirical practice. The first row shows the baseline estimate. The second row shows the selection bias adjusted estimate obtained from Oster's Proposition 1 (our Prop.\ \ref{prop:OsterP1}), which assumes $\delta = 1$ and requires an additional assumption (\ref{assump:OsterA2}, discussed in Appendix \ref{sec:BiasCorrectionsAppendix}). Based on this estimate, the researcher would conclude that selection bias only works in their favor---it has the same sign as their baseline estimate and indeed is even larger.

However, now consider the results in Panel B. If we maintain $\delta = 1$ but drop \ref{assump:OsterA2}, then there are now two values of the coefficient that are consistent with the assumptions and the data, and one of them is negative. Hence even this minor change from current practice overturns the robustness finding based on Oster's Proposition 1 alone. Now, a researcher may claim that the value $-0.0855$ is close to zero, and perhaps is not statistically significantly different from zero, in which case one may argue that Oster's Proposition 2 with $\delta = 1$ still weakly supports a positive effect. So next we consider perturbing $\delta$ a slight amount: If we set $\delta = 0.99$ then the estimated identified set now includes a large negative value, $-18.66$. Likewise, if we set $\delta = 1.01$ then the estimated identified set includes a large positive value, $15.64$. Thus we see that the magnitude of the bias adjustment is very sensitive to the specific choice of $\delta$. Finally, the last two rows of Panel B show the estimated identified sets when we replace the assumption that $\delta$ is known exactly with the assumption that we only have a bound $\bar{\delta}$ on its magnitude (\ref{assump:deltaBarBound}). Here we see that for $\bar{\delta} = 1$, almost every negative value is consistent with the data and the assumptions, and many positive values are as well, including some bigger and smaller than the baseline estimate. If we increase this bound slightly to $\bar{\delta} = 1.05$ then arbitrarily large values of the coefficient are possible too.

\subsubsection*{Identified Set Plots and Main Empirical Recommendations}

To better understand the estimates in tables \ref{table:mainTable1} and \ref{table:biasAdjustmentSensitivity}, it is useful to plot the estimated identified set $\widehat{\mathcal{B}}_I(\delta,R_\text{long}^2)$ as a function of $\delta$. Figure \ref{fig:intro}, which we discussed in the introduction, does this for the ``all associations'' specification from \cite{SatyanathVoigtlanderVoth2017} with $R_\text{long}^2 = 1$; this corresponds to the first row of Table \ref{table:mainTable1}. Figure \ref{fig:osterCumulative} plots the corresponding estimated cumulative identified set as a function of $\bar{\delta}$, which assumes that $| \delta | \leq \bar{\delta}$, as well as its convex hull. These bounds provide a modified sensitivity analysis method that accurately represents the impact of omitted variables on the range of possible coefficient values. Our main recommendation is that empirical researchers present these set estimates. Note also that the magnitude constraint \ref{assn:magnRest} can be imposed by simply superimposing horizontal lines at the largest and smallest feasible values on this plot.

Researchers may also want to summarize the information shown in plots like Figure \ref{fig:osterCumulative} of the cumulative identified set. We recommend two ways in which researchers can do this, both of which modify and extend current practice as documented in Appendix \ref{sec:inPractice}.
\begin{enumerate}
\item First, in lieu of computing bias adjusted estimates based on equation \eqref{eq:OsterP1}, researchers can estimate the cumulative identified set $\widehat{\overline{\mathcal{B}}}_I(\overline{\delta},R_\text{long}^2)$ for a reference value of $\overline{\delta}$, such as $\overline{\delta} = 1$. Then they can select an empirically plausible value of $M$, the largest allowed value of the omitted variable bias. Then remove any elements from $\widehat{\overline{\mathcal{B}}}_I(\overline{\delta},R_\text{long}^2)$ that lay outside the range $[\widehat{\beta}_\text{med} - M, \widehat{\beta}_\text{med} + M]$, because these are deemed a priori implausible. The set of all remaining elements is then the estimated set of long regression coefficient values consistent with $| \delta | \leq \overline{\delta}$, the choice of $R_\text{long}^2$, and the restriction that the magnitude of OVB is no larger than $M$. This set can be interpreted as a set of bias corrected coefficients. For example, consider the specification in the second row of Table \ref{table:mainTable1}, with $M = 2 | \widehat{\beta}_\text{med} | = 2 \times 0.17 = 0.34$ (column (6)). In this case, $\widehat{\overline{\mathcal{B}}}_I(1, 1.3 \widehat{R}_\text{med}^2) \cap[-0.17, 0.51] = \big( (-\infty, -0.84] \cup [0.16,0.19] \big) \cap [-0.17, 0.51] = [0.16, 0.19]$. Thus researchers can state:
\begin{quote}
``Assuming the omitted variable bias is no larger than $0.34$, and that selection on unobservables is no larger than selection on observables, the bias corrected set of coefficient estimates is $[0.16,0.19]$.''
\end{quote}

\item Second, in lieu of computing explain away breakdown points, researchers can estimate the sign change breakdown point for an empirically plausible value of $M$. For example, consider again the specification in the second row of table \ref{table:mainTable1}, with $M = 2 | \widehat{\beta}_\text{med} | = 0.34$ (column (6)). Here the estimated sign change breakdown point is 2.83. Thus researchers can state:
\begin{quote}
``Assuming the omitted variable bias is no larger than 0.34, we estimate that selection on unobservables would need to be at least 2.83 times as large as selection on observables to change the sign of our baseline result.''
\end{quote}
\end{enumerate}
Whenever summary statements like these are made, we recommend that researchers discuss their choice of $M$, report their choice of $R_\text{long}^2$ (and consider exploring additional choices, see Section \ref{sec:R2longChoice}), and also still present identified set plots like in Figure \ref{fig:osterCumulative}, if only in an appendix.

\section{Explaining Away versus Sign Changes in Practice: Two Meta-Analyses}\label{sec:metaAnalysis}

In Section \ref{sec:empirical} we illustrated the difference between explain away and sign change breakdown points in a single application. In this section, we study this difference in 58 different empirical applications by conducting two complementary meta-analyses. Overall, we show that this distinction between explaining away and sign changes leads to substantively different conclusions about robustness to omitted variables in the majority of empirical applications.

\subsection{Data}

The first meta-analysis is based on the papers we survey in Appendix \ref{sec:inPractice} that were published in top five economics journals. We were able to replicate 12 of the 23 papers that implemented Oster's methods; the other 11 papers either had private data or other hurdles that prevented a full replication. Among the 12 papers we replicated, we gathered all regressions where the authors implemented Oster's methods, giving 194 total regressions. Within each paper, the robustness results for many of those regressions will be strongly correlated; for example, because they use the same outcome and treatment variables but with different controls. Hence we manually selected a subset of regressions to focus on, such as only including the longest specification per table. This gives our primary sample of 34 regressions from 12 papers.\footnote{Data sources: \cite{Arbatli2020data}, 
\cite{BFG2020data}, 
\cite{BertrandEtAl2020data}, 
\cite{DippelHeblich2021data}, 
\cite{Enke2020data}, 
\cite{Eugster2019data}, 
\cite{GavazzaEtAl2019data}, 
\cite{Gregg2020data}, 
\cite{GrosfeldEtAl2020data}, 
\cite{Heldring2021data}, 
\cite{Squicciarini2020data}, 
\cite{Tabellini2020data}}

The second meta-analysis is based on papers used in the meta-analysis of \citet[sections 2 and 5]{Oster2019}. This dataset has 141 regressions from 55 different papers that were published in either a top 5 economics journal or the \emph{AEJ: Applied Economics} between 2008 and 2013, and which reported results from coefficient stability analyses. See \cite{Oster2019,Oster2019data} for more details on the construction of this dataset. Since this dataset already only contains only a handful of regressions per paper, we did not further restrict the sample. This second dataset is useful both because it provides a larger sample of regressions and also because the inclusion criteria are qualitatively different, since none of these papers implemented Oster's methods because they preceded her paper.

\subsection{Results}

We present two sets of results. First, consider Table \ref{meta_table_1}. For each regression we estimated three breakdown points. First is the explain away breakdown point, also known as the absolute value of ``Oster's delta,'' which represents current practice. Then we estimated two versions of the sign change breakdown point: One without any restriction on the magnitude of the OVB ($M = +\infty$) and the other which assumes the OVB is at most ten times the baseline estimate ($M = 10 \cdot | \widehat{\beta}_\text{med} |$). Table \ref{meta_table_1} shows percentiles of the distribution of these breakdown points across the various regressions for each of the two samples described above (panels A and B), and for the two most common choices of $R_\text{long}^2$ (Oster's rule of thumb ($\min \{1.3 \widehat{R}_\text{med}^2, 1 \}$), and 1). 

\def\mystrut{\rule{0pt}{1.25\normalbaselineskip}}
\begin{table}[!t]
\centering
\SetTblrInner[talltblr]{rowsep=0pt}
\footnotesize
\begin{talltblr}[
  caption = {Meta-Analysis: Distribution of various breakdown points.\label{meta_table_1}},
]{c l | *{5}{>{\raggedleft\arraybackslash}p{0.05\textwidth}}}
    \toprule
    \mystrut
    & & \multicolumn{5}{c}{Percentiles} \\[0.5em]
    \multicolumn{1}{c}{$R^2_\text{long}$} & \multicolumn{1}{c |}{Breakdown Point Type}  & 10\% & 25\% & 50\% & 75\% & 90\% \\[0.4em]
\midrule
\multicolumn{7}{l}{Panel A: Our Sample of Top 5 Papers} \mystrut \\[0.5em]
\hline
\mystrut
    $\min \{ 1.3 \widehat{R}_\text{med}^2, 1 \}$ 
    & Explain away & 1.07 & 1.43 & 2.46 & 6.37 & 13.22 \\
    & Sign change ($M = 10 | \widehat{\beta}_\text{med} |$)  & 0.78 & 1.03 & 1.21 & 1.47 & 1.66 \\
    & Sign change ($M = +\infty$) & 0.78 & 1.0 & 1.0 & 1.0 & 1.0 \\[0.4em]
    \midrule
    \mystrut
    1 
    & Explain away & 1.17 & 1.37 & 1.9 & 2.4 & 8.82 \\
    & Sign change ($M = 10 | \widehat{\beta}_\text{med} |$)   & 0.21 & 0.22 & 0.88 & 1.17 & 1.37 \\
    & Sign change ($M = +\infty$) & 0.21 & 0.22 & 0.88 & 1.0 & 1.0 \\[0.4em]
 \midrule
\multicolumn{7}{l}{Panel B: Oster's Sample of Top 5 Papers + \emph{AEJ:Applied}} \mystrut \\[0.5em]
\hline
\mystrut
    $\min \{ 1.3 \widehat{R}_\text{med}^2, 1 \}$ 
    & Explain away & 1.29 & 2.65 & 9.0 & 32.54 & 74.14 \\
    & Sign change ($M = 10 | \widehat{\beta}_\text{med} |$)  & 0.95 & 1.15 & 2.21 & 8.53 & 31.52 \\
    & Sign change ($M = +\infty$) & 0.75 & 0.96 & 1.0 & 1.0 & 1.0 \\[0.4em]
    \midrule
    \mystrut
    1 
    & Explain away & 1.15 & 1.76 & 2.93 & 6.76 & 24.14 \\
    & Sign change ($M = 10 | \widehat{\beta}_\text{med} |$)   & 0.88 & 1.13 & 1.79 & 3.48 & 5.24 \\
    & Sign change ($M = +\infty$) & 0.42 & 0.94 & 1.0 & 1.0 & 1.0 \\[0.4em]
    \bottomrule
    \end{talltblr}
\begin{tablenotes}[center]
\footnotesize
\item[] \hspace{-2.1mm} \emph{Note}: Both samples only include results considered robust according to standard practice (the explain away breakdown point is weakly larger than 1), where we use the value of $R_\text{long}^2$ specified in the corresponding panel). Panel A: Top: $N=29$ regressions from 11 papers. Bottom: $N=11$ regressions from 6 papers. Panel B: Top: $N=111$ regressions from 47 papers. Bottom: $N=58$ regressions from 31 papers.
\end{tablenotes}
\end{table}

Here we further restrict the sample to all regressions that would be considered robust according to standard practice, meaning that the explain away breakdown point is larger than 1. By definition, the sign change breakdown point is always weakly smaller than the explain away breakdown point. Hence results that are not robust according to the explain away breakdown point will also be non-robust according to the sign change breakdown point. So this sample restriction allows us to answer the question: Among all results considered robust by standard practice, what is the impact of using sign change breakdown points to assess robustness?

First consider the sample in Panel A. As shown in Theorem \ref{thm:osterExplainAwayBounded}, the sign change breakdown point without any magnitude restrictions is never larger than 1. This result can be seen in the last line of all panels. For the rule of thumb choice of $R_\text{long}^2$, most of these breakdown points exactly equal 1. For $R_\text{long}^2 = 1$, more than half of the regressions are no longer robust when using $1$ as the robustness threshold point. In practice, researchers almost always go beyond binary robust / not-robust conclusions and also report the magnitude of the explain away breakdown point, interpreting larger magnitudes as evidence of greater robustness.  The second row of each sub-panel shows that even with a reasonably strong restriction on the magnitude of OVB, the sign change breakdown points are substantially smaller than the explain away breakdown points. For example, the median is cut in half. Near the top of the distribution, the 90th percentile shrinks 87\% from 13.22 down to 1.66 for the rule of thumb $R_\text{long}^2$, and similarly for $R_\text{long}^2 = 1$. The results for Panel B are qualitatively similar: The sign change breakdown points are substantially smaller in magnitude than the explain away breakdown points. Overall, these results show that in most empirical applications, common conclusions like
\begin{quote}
``selection on unobservables would have to be at least 4.1 times stronger than selection on observables to explain away the relationship'' (Squicciarini \citeyear{Squicciarini2020}, \emph{AER}, page 20) [Note: This computation uses $R_\text{long}^2 = 1$]
\end{quote}
will often greatly exaggerate the magnitude of robustness, since they are based on explain away rather than sign change breakdown points (in this application, with $M = 10 | \widehat{\beta}_\text{med} |$ the sign change breakdown point is instead only 1.07; see Table \ref{table:squic}).

\begin{table}[!t]
\centering
\SetTblrInner[talltblr]{rowsep=0pt}
\footnotesize
\begin{talltblr}[
  caption = {Meta-Analysis: Distribution of $\widehat{p}^\text{bp}$, 
  the largest magnitude of OVB (relative to $\widehat{\beta}_\text{med}$) we can allow for to guarantee that the estimated sign change breakdown point is at least as large as the estimated explain away breakdown point.\label{meta_table_2}},
]{c | *{5}{>{\raggedleft\arraybackslash}p{0.06\textwidth}}}
    \toprule
    \mystrut
    & \multicolumn{5}{c}{Percentiles} \\[0.5em]
    \multicolumn{1}{c}{$R^2_\text{long}$}  & 10\% & 25\% & 50\% & 75\% & 90\% \\[0.4em]
\midrule
\multicolumn{6}{l}{Panel A: Our Sample of Top 5 Papers} \mystrut \\[0.5em]
\hline
\mystrut
    $\min \{ 1.3 \widehat{R}_\text{med}^2, 1 \}$ & 1.0 & 1.0 & 1.0 & 1.3 & 67.2 \\
[0.4em]
    1 &   1.0 & 1.0 & 1.0 & 13.8 & 21.3 \\
[0.4em]
 \midrule
\multicolumn{6}{l}{Panel B: Oster's Sample of Top 5 Papers + \emph{AEJ:Applied}} \mystrut \\[0.5em]
\hline
\mystrut
    $\min \{ 1.3 \widehat{R}_\text{med}^2, 1 \}$ & 1.0 & 1.0 & 1.0 & 2.7 & 11.4 \\
[0.4em]
    1 & 1.0 & 1.0 & 1.0 & 3.9 & 61.6 \\
[0.4em]
    \bottomrule
    \end{talltblr}
\begin{tablenotes}[center]
\footnotesize
\item[] \hspace{-2.1mm} \emph{Note}: Both samples only include results considered robust according to standard practice (the explain away breakdown point is weakly larger than 1, where we use the value of $R_\text{long}^2$ specified in the corresponding panel). Panel A: First row: $N=29$ regressions from 11 papers. Second row: $N=11$ regressions from 6 papers. Panel B: First row: $N=111$ regressions from 47 papers. Second row: $N=58$ regressions from 31 papers.
\end{tablenotes}
\end{table}

Next consider Table \ref{meta_table_2}. Here we ask the question: 
\begin{quote}
Consider a researcher who does not want to change their current practice at all, and continues to report the explain away breakdown point. What is the weakest possible restriction on the magnitude of OVB that is necessary to correctly interpret the explain away breakdown point as a sign change breakdown point?
\end{quote}
To make magnitudes of OVB comparable across different regressions, we work with magnitudes relative to the baseline estimates $\widehat{\beta}_\text{med}$. Specifically, for the $i$th regression, let $\widehat{p}_i^\text{bp}$ be the largest magnitude of OVB (defined as a proportion of the baseline estimate) such that the estimated sign change breakdown point is at least as large as the estimated explain away breakdown point in this regression.\footnote{Formally, we define $\widehat{p}_i^\text{bp} \coloneqq \sup \{ p \geq 1 :  | \widehat{\delta}_i^{\text{bp},\text{sign}}(R_\text{long}^2, p \cdot | \widehat{\beta}_{\text{med},i} |) | \geq | \widehat{\delta}^{\text{bp},\text{explain away}}_i(R_\text{long}^2) | \}$.} The two panels in Table \ref{meta_table_2} give the distribution of $\widehat{p}_i^\text{bp}$ across regressions $i$ in our meta-analysis samples for different choices of $R_\text{long}^2$. 

Here we see that, across both samples and both choices of $R_\text{long}^2$, more than 50\% of regressions require that the \emph{sign} of the parameter of interest $\beta_\text{long}$ must be known a priori if we want to interpret the explain away breakdown point as a sign change breakdown point. That follows since---recall from Section \ref{sec:modified}---the choice $M = \beta_\text{med}$ (equivalently, $p=1$) implies that the magnitude restriction \ref{assn:magnRest} is equivalent to $\beta_\text{long} \in [0, 2 \beta_\text{med}]$ (when the baseline estimand is positive; the negative case is analogous). If the sign of $\beta_\text{long}$ was known a priori, however, then there would be no need to perform a sensitivity analysis to assess whether omitted variables could flip the sign of one's baseline results; they cannot by assumption.

For regressions above the 50th percentile, the implicit magnitude restriction is typically larger than $p = 1$, but is often still quite restrictive. For example, the 75th percentile in Panel A is 1.3 under the rule of thumb $R_\text{long}^2$ choice. To interpret this, suppose $\widehat{\beta}_\text{med} = 2$. Then $p = 1.3$ corresponds to the assumption that $\beta_\text{long} \in [-0.6, 4.6]$. Additional related results for these meta-analyses are in Appendix \ref{sec:extra_meta-analyses}.

\section{Conclusion}\label{sec:conclusion}

In this paper we distinguished between two kinds of breakdown points: sign change and explain away. Using theoretical results that apply regardless of the data generating process (Section \ref{sec:OsterSignFlip}), four detailed empirical applications (Section \ref{sec:empirical} and Appendix \ref{sec:extraEmpirical_applications}), and two meta-analyses covering 58 papers (Section \ref{sec:metaAnalysis}), we showed that the distinction between these two types of breakdown points matters in most empirical settings. In particular, explain away breakdown points often substantially overstate the robustness of results to omitted variables that can flip their sign. Given these results, we recommend that researchers plot the estimated identified set for the coefficient of interest under a variety of different assumptions about omitted variables; for example, see Figure \ref{fig:osterCumulative}. We also recommend reporting sign change breakdown points as robustness summary statistics, such as in Table \ref{table:mainTable1}. Both of these recommendations are easily implemented by using our companion Stata package \texttt{regsensitivity}.

\singlespacing
\bibliographystyle{econometrica}
\bibliography{OsterSignChange}

\makeatletter\@input{MP2aux.tex}\makeatother
\end{document}